\renewcommand\footnotetextcopyrightpermission[1]{} 
\DeclareMathOperator*{\argmin}{arg\,min}
\newtheorem{theorem}{\textit{Theorem}}
\newtheorem{lemma}[theorem]{\textit{Lemma}}
\newtheorem{corollary}[theorem]{\textit{Corollary}}
\newtheorem{definition}{\textit{Definition}}
\numberwithin{definition}{section}
\theoremstyle{definition}
\newtheorem{example}{{Example}}
\numberwithin{example}{section}
\newtheorem{remark}{{Remark}}
\newcommand{\nn}{n}
\newcommand{\ns}{W}
\newcommand{\nt}{T}
\begin{document}
\title{Real-time Flexibility Feedback for Closed-loop\\ Aggregator and System Operator Coordination}

\author{{Tongxin Li}}
\affiliation{%
  \institution{CMS, California Institute of Technology}}
\email{tongxin@caltech.edu}
\author{{Steven H. Low}}
\affiliation{%
  \institution{CMS, EE, California Institute of Technology}}
\email{slow@caltech.edu}
\author{{Adam Wierman}}
\affiliation{%
  \institution{CMS, California Institute of Technology}}
\email{adamw@caltech.edu}

\begin{abstract}
Consider a system operator that wishes to optimize its objectives over
time subject to 
operational constraints as well as private constraints of controllable loads
managed by an aggregator.  In this paper, we design a 
\textit{real-time} feedback signal for the aggregator
to quantify and communicate its available flexibility 
to the system operator. 
The proposed feedback signal at each time is the conditional 
probability of future feasible trajectories that will be enabled
by the operator's decision.  We show that it is the unique distribution
that maximizes a system capacity for flexibility.  It allows the
system operator to maintain feasibility and enhance future flexibility while
optimizing its objectives.
We illustrate how the design can be used by the system operator to perform online cost minimization and real-time capacity estimation, while provably satisfying the private constraints of the loads.   

\end{abstract}

\keywords{Aggregate flexibility, real-time closed-loop control, data-driven cost minimization, electric vehicle charging}


\maketitle

\section{Introduction}

The need to manage the uncertainty and volatility caused by the growing penetration of renewable sources such as wind and solar power has created a desire to increase the ability of the system to provide flexibility via distributed energy resources (DERs) and aggregators have emerged as dominate players for coordinating these loads~\cite{callaway2010achieving,burger2017review}. 
The power of aggregators is that they are able to provide coordination among large pools of DERs and then give a single point of contact for independent system operators (ISOs) to call on for flexibility.  This enables ISOs to minimize cost, respond to unexpected fluctuations of renewables, and even mitigate failures quickly and reliably.  

To realize the potential benefits of aggregators, ISOs need to be able to call on the aggregator via a time-varying signal, e.g., a desired power profile, that satisfies the operational constraints and optimizes a system objective.  The signal is then disaggregated by the aggegator in order to determine the behavior of the loads under its control.  However, the loads have private constraints on their operation (e.g., 
satisfying energy demands of electric vehicles before their deadlines). These constraints limit the flexibility available to the aggregator and so the aggregator must also communicate with the ISO by providing a signal that
quantifies its available flexibility. This signal is of crucial importance for the ISO when determining the signal it sends to the aggregator, and thus the aggregator and the ISO form a closed-loop control system. 

This paper focuses on the design of this closed-loop system and, in particular, the design of the signal quantifying the available flexibility sent from the aggregator to the ISO.  The question of how to design the signal providing information on aggregate flexibility of the aggregator to the operator, namely \emph{the flexibility feedback signal}, is complex and has been the subject of significant research over the last decade, e.g.,~\cite{hao2014characterizing,hao2014aggregate,sajjad2016definitions,zhao2017geometric,madjidian2018energy,chen2018aggregating,sadeghianpourhamami2018quantitive,evans2019graphical,bernstein2016aggregation}.  Any feedback design must balance between a variety of conflicting goals.  In particular a design must be:
\begin{enumerate}
    \item \textbf{Concise.} Given the scale of aggregators and the complexity of the constraints of loads, it is impossible to communicate precise information about every load.  Instead, aggregate flexibility feedback must be a concise summary of a system's constraints. 
    Even if it was possible, providing exact information about the constraints of each load governed by the aggregator would not be desirable because the load constraints are typically private.  Information conveyed to the ISO must limit the leakage about specific load constraints. 
    \item \textbf{Informative.} The feedback sent by an aggregator needs to be informative enough that it allows the ISO to  achieve operational objectives, e.g., minimize cost, and, most importantly, guarantee the feasibility of the whole system with respect to the private load constraints. 
    \item \textbf{General.} Any design for a flexibility feedback signal must be general enough to be applicable for a wide variety of controllable loads, e.g., electric vehicles (EVs), heating, ventilation, and air conditioning (HVAC) systems,  energy storage units, thermostatically controlled loads, residential loads, and pool pumps.  It is impractical to imagine different feedback signals for each load, so the same design must work for all DERs.
\end{enumerate}
The challenge and importance of the design of flexibility feedback signals has led to the emergence of a rich literature. In many cases, the literature focuses on specific classes of controllable loads, such as electric vehicles (EVs)~\cite{wenzel2017real}, heating, ventilation, and air conditioning (HVAC) systems~\cite{hao2014ancillary},  energy storage units~\cite{evans2019graphical}, thermostatically controlled loads~\cite{hao2014aggregate} or residential loads and pool pumps~\cite{sajjad2016definitions,meyn2015ancillary}.  In the context of these applications, there have been a variety of approaches suggested, e.g., convex geometric approximations~\cite{hao2014aggregate,zhao2017geometric,evans2019graphical,chen2018aggregating,chen2018aggregating},
scheduling based aggregation~\cite{subramanian2012real,subramanian2013real,papadaskalopoulos2013decentralized}, and probability-based characterization~\cite{sajjad2016definitions,meyn2015ancillary}.  These approaches have all yielded some success, especially in terms of quantifying the aggregate flexibility available (we go into more detail about these approaches in the related work section below).  However, to this point there are no real-time designs of the coordination between an aggregator and a system operator that achieve the goals laid out above. In particular, the goal of providing a \emph{real-time} feedback signal that is concise and informative has seemed unapproachable and so nearly all prior work has focused on slower-timescale estimations. In addition to having a flexibility feedback signal that is concise and informative, it is also desirable to have the feedback satisfy the following property:
\begin{enumerate}
\setcounter{enumi}{3}
   \item \textbf{Real-time.} The system is time-varying and non-stationary and so it is crucial that (nearly) real-time feedback can be defined and approximated if it is to be used in online feedback-based applications.
\end{enumerate}
The need for real-time information requires that computation of the feedback signal be simple and efficient, which is in direct conflict with assuring generality across wide-ranging applications.  In addition, it is highly desirable that the feedback signal be intuitive and interpretable, so that the ISO can use it at a policy level for planning purposes.

\textbf{Contributions.}  
In this paper we propose a novel design of a flexibility feedback signal that
quantifies the flexibility available to an aggregator.  We justify our design
by proving several desirable properties of the flexibility feedback for 
real-time feedback-based applications.  In particular our flexibility feedback
allows the system operator to maintain feasibility and enhance flexibility in
real time in an online setting.
Finally we demonstrate our design through two example applications: 
online cost minimization and real time capacity estimation.  
Our design is conceptually simple, interpretable and 
we describe two approximations that are efficiently 
computable in real time. Finally, it is the unique design that
attains a certain system capacity for flexibility in an
offline setting.


In more detail, we introduce a model of the real-time closed-loop control system formed by a system opeartor and an aggregator. Within this model we define the ``optimal'' real-time flexibility feedback vector as the solution to an optimization problem that maximizes the entropy of the feedback vector.  The use of entropy in this context is novel and we show that entropic maximization has a close relationship to maximization of the system capacity. Further, we justify 
axiomatically
how entropic maximization is fundamentally necessary for providing informative and concise feedback from the aggregator to the operator.

To illustrate applicability of the optimal real-time flexibility feedback vector we propose using two applications: online cost minimization and system capacity estimation. We demonstrate the effectiveness of the flexibility feedback vector in these applications through using real EV charging data from Caltech's ACN-Data dataset~\cite{lee2019acn}.  In the case of online cost minimization, we use the flexibility feedback signal in the context of model predictive control and show that the signal is effective even when it is approximated via a data-driven approach based on reinforcement learning.  In the case of system capacity estimation, we use the flexibility feedback signal in the context of Monte Carlo estimation and show that the signal is effective even when it is approximated via look-ahead estimation (rather than estimation based on historical data).  In both cases we provide provable guarantees that when the aggregator communicates with the system operator via the optimal flexibility feedback signal the private constraints of the loads governed by the aggregator are respected despite the conciseness of the signal communicated to the operator.  This work is the first to close the loop and both define a concise measure of aggregate flexibility and show how it can be used by the system operator to optimize system objectives while respecting the constraints of loads.  

\textbf{Related literature.} 
The growing importance of aggregators for the integration of controllable loads and the challenge of defining and quantifying the flexibility provided by aggregators means that a rich literature on the topic has emerged.  Broadly, this work can be separated into three approaches.  

\textit{Convex geometric approximation.} The idea of representing the set of aggregate loads as a virtual battery model dates back to~\cite{hao2014characterizing,hao2014aggregate}. In~\cite{zhao2017geometric}, flexibility of an aggregation of thermostatically controlled loads (TCLs) was defined as the Minkowski sum of individual polytopes, which is approximated by the homothets of a virtual battery model using linear programming. The recent paper~\cite{chen2018aggregating} takes a different approach and defines the aggregate flexibility as upper and lower bounds so that each trajectory to be tracked between the bounds is disaggregatable and thus feasible. However, convex geometric approaches cannot be extended to generate real-time flexibility signals because the approximated sets cannot be decomposed along the time axis. In~\cite{bernstein2016aggregation}, a belief function of setpoints is introduced for real-time control. However, feasibility can only be guaranteed when each setpoint is in the belief set and this may not be the case for systems with memory.

\textit{Scheduling algorithm-driven analysis.} Scheduling algorithms that enable the aggregation of loads have been studied in depth over the past decade. The authors of~\cite{gan2012optimal} introduced a decentralized algorithm with a real-time
implementation for EV charging to track a given load profile.
The authors of
\cite{subramanian2012real} considered the feasibility of matching a given power trajectory and show that causal optimal policies do not exist. In this work, aggregate flexibility was implicitly considered as the set of all feasible power trajectories. Three heuristic causal scheduling policies were compared and the results were extended to aggregation of deferrable loads and
storage in~\cite{subramanian2013real}. Furthermore, decentralized participation
of flexible demand from heat pumps and electric vehicles was
addressed in~\cite{papadaskalopoulos2013decentralized}.   Notably, the flexibility signals that have emerged from this literature are not general, i.e., the apply to specific policies and DERs.

\textit{Probability-based characterization.} There is much less work on probabilistic methods. The aggregate flexibility of residential loads was defined based on positive and negative
pattern variations by analyzing collective behaviour of aggregate users~\cite{sajjad2016definitions}. A randomized and decentralized control architecture for systems of deferrable loads was proposed in~\cite{meyn2015ancillary}, with a linear time-invariant system approximation of the derived aggregate nonlinear model. Flexibility in this work was defined as an estimate of the proportion of loads that are operating.  Our work falls into this category, but differs from previous papers in that entropy maximization for a closed-loop control system yield an interpretable signal that can be informative for operator objectives in real-time, as well as guarantee feasibility of the private constraints of loads.

\textit{Other approaches.} Beyond the works described above, there are many other suggestions for metrics of aggregate flexibility, e.g., graphical-based measures~\cite{kara2015estimating} and data-driven approaches~\cite{kara2015estimating}. 
Most of these, and the approaches described above are evaluated at the aggregator level however, and much less attention has been paid to the question of real-time coordination between an ISO and an aggregator that controls decentralized loads. 

The assessment and enhancement of aggregate flexibility are often considered independent of the operational objectives and constraints today. For instance, the notion of aggregated flexibility is reported to an ISO participating in a reserve market a day ahead and the scheduling is then conducted the next day after receiving the flexibility representation as defined in~\cite{hao2014characterizing,chen2018aggregate,chen2018aggregating,madjidian2018energy}, with notable exceptions, such as~\cite{wenzel2017real}, which considered charging and discharging of EV fleets batteries for tracking a sequence of automatic generation control (AGC) signals. However, this approach has several limitations. First, in large-scale systems, knowing the exact states of each load is not realistic. Second, classical flexibility representations often rely on a precise state-transition model on the aggregator's side. Third, traditional ISO market designs, such as a day-ahead energy market, often make use of ex ante estimates of future system states. The forecasts of the future states can sometime be far from reality, because of either an inaccurate model is used, or an uncertain event occurs. In contrast, a real-time energy market~\cite{marzband2013experimental,siano2016assessing} provides more robust system control when facing uncertainty in the  environment, e.g., from fast-changing renewable resources or human behavioral parameters. This further highlights the need for real-time flexibility feedback, and serves to differentiate the approach in our paper.

\textbf{Notation and Conventions. } 
We use $\mathbb{P}\left(\cdot\right)$ and $\mathbb{E}\left(\cdot \right)$ to denote the probability distribution and expectation of random variables. The (discrete) entropy function is denoted by $\mathbb{H}(\cdot)$. To distinguish random variables and their realizations, we follow the convention to denote the former by capital letters (e.g., $X$) and the latter by lower case letters (e.g., $x$). Furthermore, we denote the length-$t$ prefix of a vector $x$ by $x_{\leq t}:=(x_1,\ldots,x_{t})$. Similarly, $x_{<t}:=(x_1,\ldots,x_{t-1})$ and $x_{a\rightarrow b}:=(x_{a},\ldots,x_{b})$. The concatenation of two vectors $x$ and $y$ is denoted by $(x, y)$. Given two vectors $x,y\in\mathbb{R}^{\nn}$, we write $ x\preceq y$ if $x_i\leq y_i$ for all $i=1,\ldots,\nn$. For $x\in\mathbb{R}$, denote $[x]_+:=\max\{0,x\}$. 




\section{Problem Formulation}
\label{sec:model}



Consider a load aggregator and a system operator that interact over a
discrete time horizon $[T] := \{1, \dots, T\}$.

\subsection{Load aggregator}

Let $\xi_t$ denote the \emph{aggregator state} at time $t$ that takes value in a certain set $\Omega$.  Let $\xi_{\leq t} := (\xi_s \in\Omega : s = 1, \dots, t )$ denote the aggregator state trajectory up to time $t$ and $\xi: = \xi_{\leq T}$. 
The aggregator needs to accomplish a certain task over the horizon $[T]$, e.g., delivering energy to a set of electric vehicles (EVs) by their deadlines.
To this end, it makes a decision $\phi_t$ at each time $t$ according to a \emph{disaggregation policy} $\phi$. 
The decision $\phi_t$ changes the aggregator state $\xi_t$ according 
to a state transition function which is not essential for our 
discussion.  We hence omit its description and represent the dynamics of
the aggregator simply by the state trajectory $\xi$.
Besides accomplishing its task, the decision $\phi_t$ 
also produces a {system input} at time $t$ that will affect 
a system cost, e.g., the aggregate EV charging rate increases
load on the electricity grid.
The aggregator has flexibility in its decisions $\phi_t$ for accomplishing its task and, we assume for this paper, is indifferent to these decisions as long as the task is accomplished by time $T$.
At each time $t$ the system operator sends a \emph{signal} 
$x_t$ to the aggregator to guide the aggregator's decision $\phi_t$
towards one that minimizes the system cost.
The signal $x_{t}$ at time $t\in[\nt]$ takes value in a discrete set $\mathbb{X}\subseteq\mathbb{R}$.\footnote{We assume that the set $\mathbb{X}$ is discrete only for simplicity of presentation. Our results, for example, the definition of optimal flexibility feedback (Definition~\ref{def:flexibility_optimal}), Theorem~\ref{thm:con} can be extended to continuous space using a density function as the flexibility feedback, changing the summations to integrals, replacing the discrete entropy functions by differential entropy functions, and redefining the system capacity $\log |\mathcal S(\phi, \xi)|$ as the volume of 
the space consisting of feasible signal trajectories.}
Let $x_{\leq t}:=(x_1,\ldots,x_{t})$ denote the signal trajectory up to
time $t$ and $x := x_{\leq T}$.
In general, the aggregator's decision $\phi_t := \phi_t(\xi_t, x_{\leq t})$ is a causal function of aggregator state $\xi_t$ and signal trajectory
$x_{\leq t}$ up to time $t$.\footnote{The main results can be easily extended to allow for non-causal policies.}  
We use $\phi$ both to denote the disaggregation policy or
the decision trajectory $$\phi=\left( \phi_1(\xi_{1},x_{1}),\ldots,\phi_{\nt}(\xi_T,x)\right)$$
depending on the context.
We often refer to a pair of disaggregation policy and aggregator state
trajectories $(\phi, \xi)$ as an \emph{aggregator trajectory}.
That the aggregator must accomplish its task but is otherwise indifferent 
to its decisions $\phi$ can be modeled by the constraints:
\begin{align}
\label{eq:2.2}
g_i\left(x;\phi,\xi\right) &\leq 0, \ i=1,\ldots, m,
\end{align}
where each $g_i$ is an arbitrary function of $\phi$, $\xi$ and $x$. 



The disaggregation policy $\phi$ can represent a variety of control strategies, such as a scheduling algorithm for EV charging, energy disaggregation, or price signals.
We illustrate our model of an aggregator using an EV charging application. 

\begin{example}[Aggregator: EV charging]
\label{example:ev}
\emph{Consider an aggregator that is an EV charging facility with $\nn$ users. 
Each user $j$ has a private vector $\left(a(j), d(j),e(j),r(j)\right)\in\mathbb{R}^4$ where $a(j)$ denotes its arrival (connecting) time; $d(j)$ denotes its departure (disconnecting) time, normalized according to the time indices in $[\nt]$; $e(j)$ denotes the total energy to be delivered, and $r(j)$ is its peak charging rate. 
Fix a set of $\nn$ users with their private vectors $\left(a(j), d(j),e(j),r(j)\right)$, 
the aggregator state $\xi_t$ at time $t\in [\nt]$ is a collection of
length-3 vectors $(d(j), e_t(j), r(j) : a(j) \leq t \leq d(j))$ for each EV that
has arrived and has not departed by time $t$.  
Here $e_t(j)$ is the remaining energy 
demand of user $j$ at time $t$.
The decision $\phi_t(j)$ is the energy delivered to each user $j$ at time $t$.
A policy $\phi$ can be well-known scheduling policies such as earliest-deadline-first, least-laxity-first, etc.
The aggregator decision $\phi_t(j) \in{\mathbb{R}}_{+}$ at each time $t$ 
updates the state, in particular $e_t(j)$, even though do not explicitly
represent the state transition function.  The decision also produces an 
aggregate charging energy
$\sum_{j: a(j) \leq t \leq d(j)} \phi_t(j)$ that affects the load on the
power grid and operational cost.
}

\emph{Suppose, in the context of demand response, the system operator 
(a local utility company, or a building management) sends
a signal $x_t$ that is the aggregate energy that can be allocated to EV charging.
The aggregator makes charging decisions $\phi_t(j)$ to track the signal $x_t$ received from the system operator as long as they will meet the energy demands 
of all users before their deadlines. 
Then the constraints in \eqref{eq:2.2} include the following constraints on 
the charging decisions:}
\begin{subequations}
\begin{align}
\nonumber
\phi_{t}(j)= 0 \ , \  t<a(j), & \ j=1,\ldots,\nn,\\
\nonumber
\phi_{t}(j)= 0\ , \ t>d(j), & \  j=1,\ldots,\nn,\\
\label{eq:f3}
\sum_{j=1}^{\nn}\phi_{t}(j)= x_{t}, & \ t=1,\ldots,\nt,\\
\label{eq:f2}
\sum_{t=1}^{\nt}\phi_{t}(j) = e(j), & \ j=1,\ldots,\nn,\\
0\leq\phi_{t}(j)\leq r(j), & \ t=1,\ldots,\nt
\end{align}
\label{eq:f123}
\end{subequations}
\emph{where constraint \eqref{eq:f3} ensures that the aggregator decision $\phi_t$ tracks the signal $x_{t}$ at each time $t\in [\nt]$,
the constraint (\ref{eq:f2}) guarantees that EV $j$'s energy demand is satisfied,  and the other constraints say that the aggregator cannot charge an EV before its arrival, after its departure, or at a rate that exceeds its limit.}
\end{example}

\subsection{System operator}
As Example \ref{example:ev} illustrates, the aggregator decisions $\phi_t$ 
produce a system input that affects system operation.
The goal of the system operator is to compute a 
signal $x_t$ at time $t\in [\nt]$ to guide the aggregator's decisions
$\phi_t$ so as to minimize the {system cost} given by a \emph{cost function} 
 $f:\mathbb{X}^{\nt}\rightarrow\mathbb{R}$.
 The signal trajectory $x$ must satisfy certain operational constraints,  parameterized by an \textit{environmental parameter}  $\zeta$:
\begin{align}
\label{eq:2.3}
h_i\left(x; \zeta\right) &\leq 0, \ i=1,\ldots, k.
\end{align}

\begin{example}[Operational constraints]
\label{example:operational_constraints}
\emph{Suppose $x_t$ represents the total load of an EV charging facility, or on a power system.
If the operator performs peak shaving, 
then \eqref{eq:2.3} may be:}
\begin{align}
\label{eq:max_rate}
   x_{t}\leq \overline{\gamma}, \ t=1,\ldots,\nt.
\end{align}
\emph{If the operator limits ramp rates, 
then \eqref{eq:2.3} may be:}
\begin{align}
\label{eq:ramp}
   \left|x_{t+1} - x_{t}\right|\leq \varepsilon, \ t=1,\ldots,\nt.
\end{align}
\end{example}


\begin{example}[Cost function]
\label{example:online_optmization}
\emph{Suppose the electricity cost at each time $t\in [\nt]$ is a function $f_t:\mathbb{X}\rightarrow \mathbb{R}_{+}$.  Then the total electricity cost is $ f(x) := \sum_{t=1}^{\nt}f_t(x_t)$.}


%
\end{example}

The goal of the system operator is to choose the signal $x$ so as to solve:
\begin{subequations}
\begin{align}
\label{eq:offline_1}
\min_{x} f(x)\\
\label{eq:offline_2}
\text{subject to } g_i\left(x; \phi,\xi\right) &\leq 0, \ i=1,\ldots, m,\\
\label{eq:offline_3}
h_i\left(x; \zeta\right) &\leq 0, \ i=1,\ldots, k.
\end{align}
i.e., 
\label{eq:offline}
\end{subequations}
the operator 
wishes to minimize its cost $f$ subject to its operational constraints  \eqref{eq:offline_3}  while the load aggregator 
needs to fulfill its obligations in the form of constraints 
\eqref{eq:offline_2}.  
This is an offline problem that involves global information at all
times $t\in [\nt]$.
The challenge is that the constraints \eqref{eq:offline_2} are 
private to the aggregator.  
It is impractical for the aggregator to 
communicate the constraint functions $g_i\left(x; \phi,\xi\right)$
to the operator because of privacy concerns or computational effort, and because in an online setting, even the aggregator will not know all the
constraints at each time $t$ that involve future information, e.g., future EV arrivals in Example \ref{example:ev}.

\begin{remark}
\label{remark:online}
For simplicity, we describe our model in an offline setting where 
the cost and the constraints 
(e.g., see \eqref{eq:f2} in Example 2.1) in the optimizatoin
problem \eqref{eq:offline} 
are expressed in terms of the entire trajectories $(x; \phi, \xi)$.
All functions defined in this paper, however, are \emph{causal} in 
that they depend \emph{only 
on local information available at time $t$}.  Hence these functions
are designed for solving an online version of the offline problem
\eqref{eq:offline}.
\end{remark}

\subsection{Online feedback-based solution}
\label{sec:optimization}

We explore a solution where the system operator and the 
aggregator jointly solve an online version of \eqref{eq:offline} in
a closed loop in real time, as illustrated in Figure \ref{fig:framework}.
\begin{figure*}[htbp]
	\centering
	\includegraphics[scale=0.35]{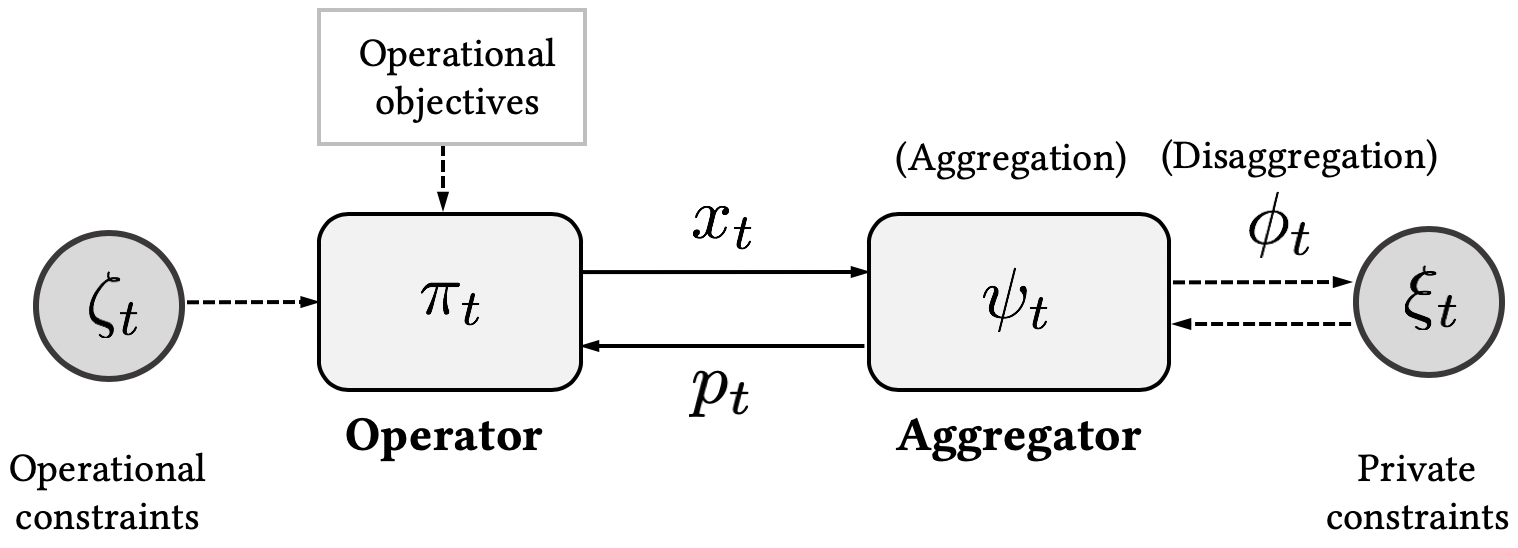}
	\caption{{A feedback control approach for solving an online 
	version of \eqref{eq:offline}. }}
	\label{fig:framework}
	\medskip
\end{figure*}
Our approach does not require the aggregator to know the system 
operator's optimization problem \eqref{eq:offline}, but only the
signal $x_t$ at each time $t$ from the operator.
It does not require the system operator to know the aggregator 
constraints \eqref{eq:offline_2}, but only a feedback signal 
$p_t$ (to be designed) from the aggregator.  The system
operator generates its signal $x_t$ using a causal function $\pi_t$
and the aggregator generates its feedback $p_t$ using a
causal function $\psi_t$.  By an ``online feedback'' solution, we mean
that these functions $(\pi_t, \psi_t)$ use {only information available 
locally at time $t$}.

Specifically, our approach proceeds as follows.  At each time $t$, the
aggregator computes a length-$|\mathbb{X}|$ vector 
\begin{subequations}
\begin{eqnarray}
p_t(\cdot|x_{<t}; \xi_t) & = & \psi_t (x_{<t}; \xi_t) \ \ =: \ \  \psi_t (x_{<t})
\label{eq:feedback1.pt}
\end{eqnarray}
based on its current state $\xi_t$ and previously received signal
trajectory ${x}_{<t}=(x_1,\ldots,x_{t-1})$, and sends it to
the system operator.  We will omit $\xi_t$ in the notation when it is not 
essential to our discussion and simplify the probability vector as $p_t$.\footnote{Note that in~\eqref{eq:af3} we slightly abuse the notation and use $p_t$ to denote a conditional distribution. This is only for computational purposes and the information sent from an aggregator to an operator at time $t\in [\nt]$ is still a length-$|\mathbb{X}|$ probability vector, conditioned on fixed $x_{<t}$.} 
The system operator then computes a (possibly random) signal
\begin{eqnarray}
x_t & = & \pi_t(p_t; \zeta) \ \ =: \ \ \pi_t(p_t)
\label{eq:feedback1.xt}
\end{eqnarray}
based on the aggregator feedback $p_t$ and sends it to the aggregator.
\label{eq:feedback1}
\end{subequations}
 We will omit $\zeta$ in the notation when it is not essential to our discussion. 
The aggregator makes its decision $\phi_t(\xi_t, x_{\leq t})$.
It then computes the next feedback $p_{t+1}$ and the cycle repeats.

The operator chooses its signal $x_t$ in order to solve the time-$t$ problem
in an online version of \eqref{eq:offline}, so the function $\pi_t$ denotes
the mapping from the aggregator feedback $p_t$ to an optimal solution
of the time-$t$ problem.  See Section \ref{sec:online} for an example.

The focus of this paper is to propose an aggregator feedback $\psi_t$ in \eqref{eq:feedback1.pt}
that quantifies its future flexibility that will be enabled by an operator 
decision $x_t$.  The feedback $p_t$ therefore is a surrogate for the
aggregator constraints \eqref{eq:offline_2} to guide the operator's decision.
Specifically, define the set of all \textit{feasible signal trajectories} for the aggregator 
as:
\begin{align*}
\mathcal{S}(\phi,\xi):=\left\{x\in\mathbb{X}^{\nt}: x
    \text{ satisfies } \eqref{eq:offline_2} \right\}.
\end{align*}
Throughout, we assume that $\mathcal{S}(\phi,\xi)$ is non-empty. 
We propose that the aggregator function $\psi_t(x_{<t}; \xi_t)$ 
computes the conditional probabilities of future signal trajectories 
$x_{>t} := (x_{t+1}, \dots, x_T)$ that satisfy
the aggregator constraints \eqref{eq:offline_2}, as a function of the operator's
signal choice $x_t$, conditioned on the signal trajectory $x_{<t} := (x_1, \dots, x_{t-1})$
up to time $t-1$.
Formally, let $\mathcal P$ denote the probability simplex:
\begin{align*}
\mathcal{P}:=\left\{p\in\mathbb{R}^{|\mathbb{X}|}:p(x)\geq 0, x\in\mathbb{X}; \sum_{x\in\mathbb{X}}p(x)=1\right\}.
\end{align*} 
Fix any aggregator trajectory $(\phi, \xi)$.
Then the aggregator function $\psi_t: \mathbb{X}^{t-1} \times \Omega \rightarrow\mathcal{P}$ at each time $t$ is:  
$p_t \ = \ \psi_t(x_{<t}; \xi_t)$ such that for each $x_t\in \mathbb X$,
\begin{align}
\psi_t(x_{<t}; \xi_t) \ := \
p\left(\cdot | (x_{<t}) \right)\in\mathcal{P}.
\label{eq:psit.1}
\end{align}
We refer to $p_t$ as \emph{flexibility feedback} sent at time 
$t\in [\nt]$ from the aggregator to the system operator.
Given current aggregator state $\xi_t$ and signal trajectory $x_{<t}$,
the conditional probability $\psi_t(x_{<t}; \xi_t)$ depends not just on
the operator decision $x_t$, but also on the future evolution of the aggregator
state $\xi_t$. 
In this paper, we do not fully specify the details of
the dynamical process $\xi_t$. For different applications, $\xi_t$ may evolve
according to different state transition functions, possibly with stochastic inputs.
These details will determine the value of the flexibility feedback 
$p_t = \psi_t(x_{<t}; \xi_t)$ defined in \eqref{eq:psit.1}.

In this sense, \eqref{eq:psit.1} does not specify a specific
aggregator function $\psi_t$, but a class of possible functions
$\psi_t$.  Every function in this collection is \emph{causal} in that it
depends only on information available to the aggregator at time
$t$.  
In contrast to most aggregate flexibility notions in the literature~\cite{hao2014characterizing,hao2014aggregate,sajjad2016definitions,zhao2017geometric,madjidian2018energy,chen2018aggregating,sadeghianpourhamami2018quantitive,evans2019graphical}, the 
flexibility feedback  here 
is specifically designed for an online feedback control setting.

\section{Optimal Flexibility Feedback}
\label{sec:real_time}
\label{sec:feedback}

In this section we propose a specific function $\psi_t$
in the class defined by \eqref{eq:psit.1} for computing
aggregator feedback to quantify its future flexibility.  
We will justify our proposal by showing that the proposed $\psi_t$ 
has several desirable properties for solving an online 
version of (6) using the real-time feedback-based approach 
\eqref{eq:feedback1}.

\subsection{Definition}

The intuition behind our proposal is that the conditional
probability $p_t(x_t) := p_t(x_t|x_{<t})$ measures 
the resulting future flexibility of the aggregator
if the system operator chooses $x_t$ as the signal at 
time $t$, given the signal trajectory up to time $t-1$.
The sum of the conditional entropy of $p_t$ thus is a 
measure of how informative $p_t$ is.
This suggests choosing a conditional distribution $p_t$ that 
maximizes its conditional entropy.
Fix any aggregator trajectory $(\phi, \xi)$.  
Consider the optimization problem:
\begin{subequations}
\begin{align}
\label{eq:af1}
{\digamma}(\phi,\xi) \ := \ \max_{p_1,\ldots,p_{\nt}}\ \sum_{t=1}^{\nt}\mathbb{H}\left({X}_t|X_{<t}\right)\ 
    \text{subject to} \ X\in\mathcal{S}(\phi,\xi)
\end{align}
where the variables are conditional distributions:
\begin{align}
p_t & \ := \ p_t(\cdot|\cdot):=\mathbb{P}_{X_t|X_{<t}}(\cdot|\cdot),
\qquad t \in [\nt]
\label{eq:af3}
\end{align}
$X\in \mathbb X^T$ is a random variable distributed according to the joint distribution $\prod_{t=1}^{\nt}p_t$ and $\mathbb{H}\left({X}_t|X_{<t}\right)$ is the conditional entropy of  $p_t$ defined as:
\begin{align}
\label{eq:ent}
\mathbb{H}\left({X}_t|X_{<t} \right) & \ := \
\sum_{x_1,\ldots,x_{t}\in\mathbb{X}}\Big(-\prod_{\ell=1}^{t}p_\ell(x_\ell|x_{<\ell})\Big)\log{p_t(x_{t}|x_{<t})}.
\end{align}
By definition, a quantity conditioned on ``$x_{<1}$'' means an unconditional 
quantity, so in the above, $\mathbb{H}\left({X}_1|X_{<1} \right) := 
\mathbb{H}\left({X}_1\right) := \mathbb{H}\left({p}_1\right)$.
The chain rule shows that $\sum_{t=1}^{\nt}\mathbb{H}\left({X}_t|X_{<t}\right) = \mathbb{H}\left(X\right)$.
Hence \eqref{eq:af} can be interpreted as maximizing the entropy
$\mathbb{H}\left(X\right)$ of a random trajectory $X$ sampled according 
to the joint distribution $\prod_{t=1}^{\nt}p_t$, conditioned on $X$ satisfying~\eqref{eq:2.2}, where the maximization is over the collection
of conditional distributions $(p_1, \dots, p_T)$.
We provide in Section \ref{app:axiom}
an axiomatic justification of maximizing the entropy 
$\mathbb{H}\left(X\right)$ of the signal trajectory $X$ in \eqref{eq:af1}.

\label{eq:af}
\end{subequations}
\begin{definition}[Optimal flexibility feedback]
\label{def:flexibility_optimal}
Fix any aggregator trajectory $(\phi, \xi)$.
The flexibility feedback $p_t^* = \psi^*_t(x_{<t}; \xi_t)$ for $t\in [\nt]$
is called the \textbf{optimal flexibility feedback} if 
$(p_1^*, \dots, p_T^*)$ is the unique optimal solution of \eqref{eq:af}.
\end{definition}

\begin{remark}
Even though the optimization problem \eqref{eq:af} involves variables $p_t$
for the entire time horizon $[\nt]$, the individual variables
$p_t$ in \eqref{eq:af3} are conditional probabilities that depend
only on information available to the aggregator at times $t$.
Therefore the optimal flexibility feedback $\psi^*_t$ in Definition \ref{def:flexibility_optimal}
is indeed causal and in the class of functions $\psi^*_t$ defined in \eqref{eq:psit.1}.
The existence and uniqueness of $p^*_t$ is guaranteed by Theorem \ref{thm:con}
below, which also implies that $\psi^*_t$ is unique.
\qed\end{remark}

We demonstrate Definition \ref{def:flexibility_optimal} using a toy example.
\begin{example}[Optimal flexibility feedback $p^*$]
\label{example:toy}
Consider the following instance of Example~\ref{example:ev}. Suppose the number of charging time slots is $\nt=3$ and there is one customer, whose private vector is $(1,3,1,1)$ and possible energy levels are $0$ (kWh) and $1$ (kWh), i.e., $\mathbb{X}\equiv\{0,1\}$. Since there is only one EV, the scheduling algorithm $\phi$ (disaggregation policy) assigns all power to this single EV. For this particular choices of $\xi$ and $\phi$, the set of feasible trajectories is $\mathcal{S}(\phi,\xi)=\{(0,0,1),(0,1,0), (1,0,0)\}$, shown in Figure~\ref{fig:toy_example} with the corresponding optimal conditional distributions given by~\eqref{eq:af}. 
\begin{figure}[h]
    \centering
    \includegraphics[scale=0.29]{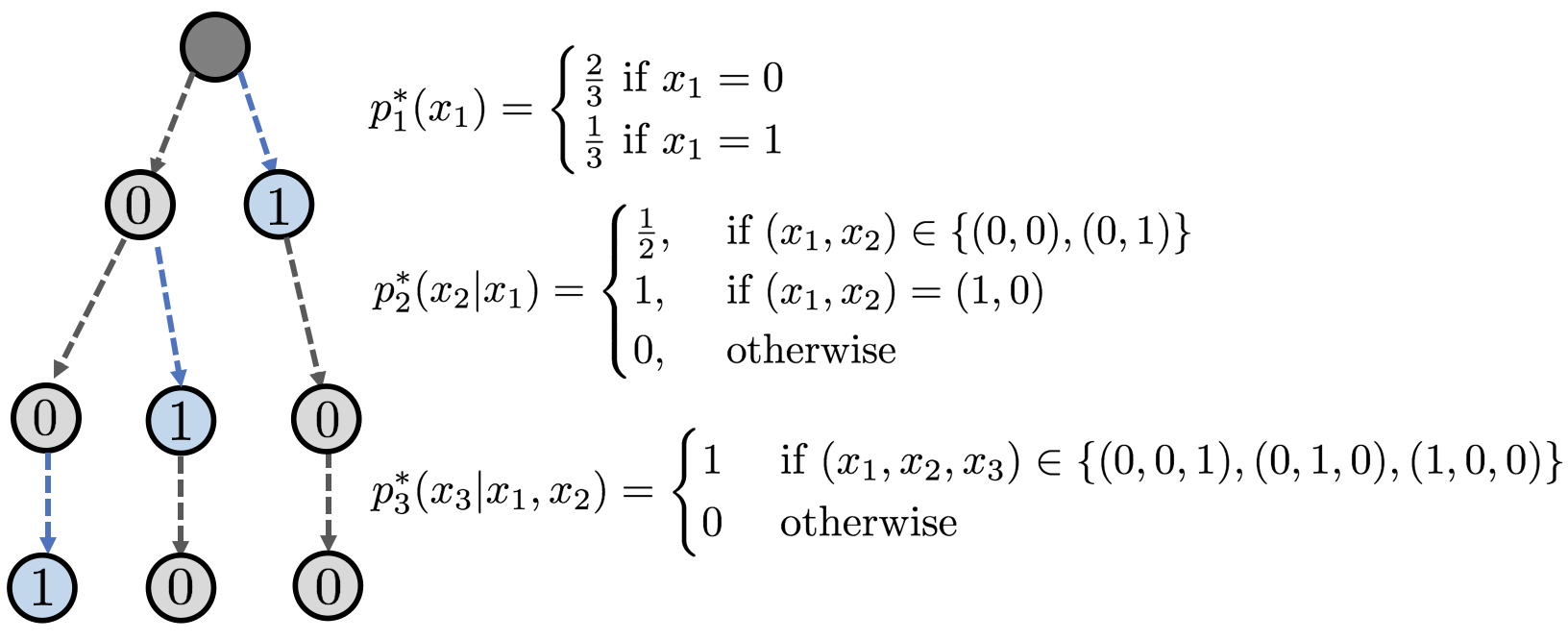}
    \caption{Feasible trajectories of power signals and the computed optimal flexibility feedback in Example~\ref{example:toy}.}
    \label{fig:toy_example}
\end{figure}
\end{example}

\subsection{Properties of $p^*_t$}
\label{subsec:properties}

We now show that the proposed optimal flexibility feedback $p^*_t$ has several desirable properties.
We start by computing $p^*_t$ explicitly. 
Fix any aggregator trajectory $(\phi, \xi)$.
Given any signal trajectory $x_{\leq t}$, define the 
set of \emph{subsequent} feasible trajectories as:
\begin{align}
\mathcal{S}(\phi,\xi|x_{\leq t})\ := \ \Big\{x_{>t}\in\mathbb{X}^{\nt-t}:  g_i\left(\phi,\xi,x\right) \leq 0, \
 \forall i=1,\ldots, m\Big\}.
\label{eq:set_of_feasible}
\end{align}
where $x := (x_{\leq t}, x_{>t})$.  The size 
$\left|\mathcal{S}(\phi,\xi|x_{\leq t})\right|$ of the set of subsequent
feasible trajectories is a measure of future flexibility, conditioned on
$x_{\leq t}$. Our first result justifies our calling $p^*_t$ the optimal
flexibility feedback: $p^*_t$ is a measure of the future flexibility that will be
enabled by the operator's signal $x_t$ and it attains a measure of
system capacity for flexibility (see Remark \ref{remark:systemcap} below).
By definition, 
$\mathcal{S}(\phi,\xi|x_{<1})\ := \ \mathcal{S}(\phi,\xi)$ and 
$p_1^*(x_1|x_{<1}) \ := \ p_1^*(x_1)$. 
\begin{theorem}
\label{thm:con}
The {optimal} flexibility feedback $p^*_t$ is given by
\begin{align}
\label{eq:optimal}
p_t^*(x_t|x_{<t}) \ = \ \frac{\left|\mathcal{S} \left(\phi,\xi | 
    (x_{< t}, x_t) \right) \right|}{\left|\mathcal{S}(\phi,\xi|x_{< t})\right|},  
    \quad  \forall (x_{< t}, x_t) \in\mathbb{X}^{t}.
\end{align}
for $t \in [\nt]$.
Moreover, the optimal value ${\digamma}(\phi,\xi)$ of \eqref{eq:af} is equal to $\log\left|\mathcal{S}(\phi,\xi)\right|$.
\end{theorem}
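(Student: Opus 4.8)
The plan is to recognize \eqref{eq:af} as a maximum-entropy problem with a support constraint, for which the uniform distribution is the unique maximizer, and then simply to read off the conditionals of that uniform distribution. First I would use the chain rule, already recorded in the text, to rewrite the objective $\sum_{t=1}^{\nt}\mathbb{H}(X_t|X_{<t})$ as the joint entropy $\mathbb{H}(X)$, where $X$ is distributed according to $\prod_{t=1}^{\nt}p_t$. The constraint $X\in\mathcal{S}(\phi,\xi)$ then says precisely that the law of $X$ is supported on the finite set $\mathcal{S}(\phi,\xi)$, which is non-empty by assumption. Conversely, every distribution supported on $\mathcal{S}(\phi,\xi)$ factors through the chain rule into a collection of conditionals $p_t(\cdot|\cdot)$, so \eqref{eq:af} is exactly the problem of maximizing $\mathbb{H}(X)$ over all laws of $X$ supported on $\mathcal{S}(\phi,\xi)$.

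Next I would invoke the standard maximum-entropy fact: for a random variable taking values in a finite set $A$, one has $\mathbb{H}(X)\leq\log|A|$, with equality if and only if $X$ is uniform on $A$ (Gibbs' inequality, i.e.\ Jensen applied to $-\log$). Applying this with $A=\mathcal{S}(\phi,\xi)$ immediately yields ${\digamma}(\phi,\xi)=\log\left|\mathcal{S}(\phi,\xi)\right|$ and identifies the unique optimal joint law as the uniform distribution $u$ on $\mathcal{S}(\phi,\xi)$. This settles the ``moreover'' clause as well as the existence and uniqueness of the optimal joint distribution.

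It then remains to recover the conditionals of $u$ and match them to \eqref{eq:optimal}. For a prefix $x_{\leq t}$, the number of feasible trajectories extending it is exactly $\left|\mathcal{S}(\phi,\xi|x_{\leq t})\right|$, so the level-$t$ marginal of $u$ is $\mathbb{P}_u(X_{\leq t}=x_{\leq t})=\left|\mathcal{S}(\phi,\xi|x_{\leq t})\right|/\left|\mathcal{S}(\phi,\xi)\right|$; dividing this by the level-$(t-1)$ marginal cancels the common factor $\left|\mathcal{S}(\phi,\xi)\right|$ and produces precisely the ratio in \eqref{eq:optimal}.

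The step I expect to require the most care is \emph{uniqueness of the conditionals} $p^*_t$, as distinct from uniqueness of the joint law. The objective is insensitive to the value of $p_t(\cdot|x_{<t})$ on any prefix $x_{<t}$ of zero probability — equivalently, any prefix with $\left|\mathcal{S}(\phi,\xi|x_{<t})\right|=0$ — since such terms carry zero weight in the sum \eqref{eq:ent}. Consequently the formula \eqref{eq:optimal} is to be read on the reachable prefixes (those with $\left|\mathcal{S}(\phi,\xi|x_{<t})\right|>0$, where the ratio is well defined), and on exactly these prefixes the conditional is pinned down by the unique joint law $u$. The clean way to finish is therefore to state the convention for the degenerate $0/0$ case explicitly and to observe that the choice there affects neither the objective \eqref{eq:af1} nor the closed-loop dynamics, so that uniqueness holds in the only sense that matters.
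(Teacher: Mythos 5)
Your proof is correct, but it takes a genuinely different route from the paper's. The paper argues by induction on the horizon length $T$: it conditions on the first signal $x_1$, invokes the induction hypothesis $\digamma(\phi,\xi|x_1)=\log\left|\mathcal{S}(\phi,\xi|x_1)\right|$, and then solves the one-step problem $\max_{p_1}\sum_{x_1}p_1(x_1)\log\bigl(\left|\mathcal{S}(\phi,\xi|x_1)\right|/p_1(x_1)\bigr)$ --- a backward, dynamic-programming-style argument that produces the optimal conditionals one time step at a time. You instead argue globally: the chain rule (already noted in the paper's text) turns the objective into the joint entropy $\mathbb{H}(X)$, Gibbs' inequality identifies the uniform law on $\mathcal{S}(\phi,\xi)$ as the unique maximizer among laws supported on that set, and the conditionals in \eqref{eq:optimal} are then read off by disintegrating the uniform law into prefix marginals. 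The two arguments rest on the same elementary inequality (the paper's one-step maximization is itself an instance of Gibbs), but they package it differently, and each packaging buys something. Your global route obtains the uniform joint distribution --- which the paper only records afterwards as \eqref{eq:uniform} --- as the primary object rather than a corollary, and it makes explicit the $0/0$ subtlety: the formula \eqref{eq:optimal} and the uniqueness claim are only meaningful on prefixes with $\left|\mathcal{S}(\phi,\xi|x_{<t})\right|>0$, a point the paper's induction (and its statement ``$\forall (x_{<t},x_t)\in\mathbb{X}^t$'') silently glosses over; your handling of this is a genuine improvement in rigor. The paper's inductive route, on the other hand, mirrors the causal, time-recursive way the feedback $p_t^*$ is actually computed and used online, so it connects more directly to the closed-loop interpretation.
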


\begin{proof} We prove the statement by induction on $T$. It is straightforward to verify the results when $\nt=1$. We suppose the theorem is true when $\nt=m$. Suppose $\nt=m+1$.
Let
\begin{align*}
\digamma(\phi,\xi|x_1):=\max_{p_2,\ldots,p_{\nt}}\sum_{t=2}^{\nt}\mathbb{H}\left(X_{t}|X_{<t}\right)
\end{align*}
denote the optimal value corresponding to the time horizon 
$t\in [\nt]\backslash \{1\}$, conditioning on $x_1$. We have
\begin{align*}
{\digamma}(\phi,\xi) = \max_{p_{1}}\sum_{x_1\in\mathbb{X}}
    p_{1}(x_1)\digamma(\phi,\xi|x_1)+\mathbb{H}(p_{1}).
\end{align*}
By the induction hypothesis, $\digamma(\phi,\xi|x_1)= \log\left|\mathcal{S}(\phi,\xi|x_1)\right|$. Therefore, 
\begin{align*}
{\digamma}(\phi,\xi) =& \max_{p_{1}}\sum_{x_1\in\mathbb{X}}p_{1}(x_1)\log\left|\mathcal{S}(\phi,\xi|x_1)\right|+\mathbb{H}(p_{1})\\
=&\max_{p_{1}}\sum_{x_1\in\mathbb{X}}p_{1}(x_1)\log\left(\frac{\left|\mathcal{S}(\phi,\xi|x_1)\right|}{p_1(x_1)}\right)
\end{align*}
whose optimizer $p_1^*$ satisfies (\ref{eq:optimal}) and we get ${\digamma}(\phi,\xi)=\log\left|\mathcal{S}(\phi,\xi)\right|$. The theorem follows by finding the optimal conditional distributions $p_2^*,\ldots,p_{\nt}^*$ inductively.
\end{proof}

Given the unique optimal flexibility feedback $(p^*_1,\ldots,p^*_{\nt})$ guaranteed
by Theorem \ref{thm:con}, let
$q^*(x)=\prod_{t=1}^{\nt}p^*_t(x_{t}|x_{<t})$ denote the joint distribution
of the signal trajectory $x$.  Then \eqref{eq:optimal} implies that the joint
distribution $q^*$ is the uniform distribution over the set $\mathcal{S}\left(\phi,\xi \right)$ of all feasible trajectories:
\begin{align}
\label{eq:uniform}
    q^*(x):=\begin{cases}
    1/\left|\mathcal{S}(\phi,\xi)\right| & \text{ if } x\in \mathcal{S}(\phi,\xi)\\
    0 & \text{ otherwise}
    \end{cases}.
\end{align}
\begin{remark}[System capacity ${\digamma}(\phi,\xi)$]
\label{remark:systemcap}
Fix any aggregator trajectory $(\phi, \xi)$.
The size $\left|\mathcal{S}\left(\phi,\xi\right)\right|$ 
is a measure of flexibility inherent in the aggregator.  
We will hence call
$\log \left|\mathcal{S}\left(\phi,\xi\right)\right|$ the \emph{system capacity}.
Theorem \ref{thm:con} then says that the optimal value of~\eqref{eq:af}
is the system capacity, 
${\digamma}(\phi,\xi)=\log \left|\mathcal{S}\left(\phi,\xi\right)\right|$.
Moreover
the optimal flexibility feedback $(p^*_1, \dots, p^*_T)$ is the unique 
collection of conditional distributions that attains the system capacity
in ~\eqref{eq:af}.
This is intuitive since the entropy of a random trajectory $x$ in $\mathcal{S}(\phi,\xi)$ is maximized by the uniform distribution
$q^*$ in ~\eqref{eq:uniform} 
induced by the conditional distributions $(p_1^*,\ldots,p_{\nt}^*)$.
\qed
\end{remark}

Theorem \ref{thm:con} directly implies the following important properties of 
the optimal flexibility feedback.
\begin{corollary}[feasibility and flexibility]
\label{coro:property}
Let $p^*_t = p^*_t(\cdot|x_{<t})$ be the optimal flexibility feedback  
at each time $t\in [\nt]$.
\begin{enumerate}
\item 
For any signal trajectory $x=(x_1,\ldots,x_{\nt})$, if 
\begin{align*}
    p^*_t(x_t|x_{<t}) \ > \ 0 \quad \text{ for all } t\in [\nt]
\end{align*}
then $x\in\mathcal{S}(\phi,\xi)$. 

\item 
For all $x_t,x_t'\in\mathbb{X}$ at each time $t$, if
\begin{align*}
    p^*_t(x_t|x_{<t})\ \geq \ p^*_t(x_t'|x_{<t})
\end{align*}
then 
$|\mathcal{S}(\phi,\xi| (x_{<t}, x_t))|\geq |\mathcal{S}(\phi,\xi|(x_{<t}  ,x_t'))|$.
\end{enumerate}
\end{corollary}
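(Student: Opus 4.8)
The plan is to derive both parts directly from the closed-form expression for the optimal feedback established in Theorem~\ref{thm:con}, namely
\begin{align*}
p^*_t(x_t|x_{<t}) \ = \ \frac{|\mathcal{S}(\phi,\xi|(x_{<t},x_t))|}{|\mathcal{S}(\phi,\xi|x_{<t})|}.
\end{align*}
The essential structural fact I would invoke is the counting identity $|\mathcal{S}(\phi,\xi|x_{<t})| = \sum_{x_t\in\mathbb{X}} |\mathcal{S}(\phi,\xi|(x_{<t},x_t))|$, which says that the subsequent feasible trajectories conditioned on $x_{<t}$ partition according to the choice of $x_t$. This guarantees that each ratio is well-defined, i.e., has a positive denominator, as long as the conditioning prefix is extendable to a feasible trajectory, with the base case $|\mathcal{S}(\phi,\xi)| > 0$ supplied by the standing non-emptiness assumption.

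For part (1), I would first observe that positivity of $p^*_t$ propagates the non-emptiness of the conditioned feasible set forward in time. Concretely, if $p^*_t(x_t|x_{<t}) > 0$ then the numerator $|\mathcal{S}(\phi,\xi|(x_{<t},x_t))| > 0$, which is exactly the statement that the denominator at the next stage, $|\mathcal{S}(\phi,\xi|x_{\leq t})|$, is positive, so the feedback at stage $t+1$ is well-defined and the induction continues. Applying the hypothesis at the terminal stage $t = \nt$ yields $|\mathcal{S}(\phi,\xi|x)| > 0$ for the full trajectory $x = (x_1,\ldots,x_{\nt})$. Since the only ``subsequent'' trajectory after a complete signal is the empty one, $\mathcal{S}(\phi,\xi|x)$ is non-empty precisely when $x$ itself satisfies the aggregator constraints~\eqref{eq:offline_2}; hence $x\in\mathcal{S}(\phi,\xi)$, as claimed.

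For part (2), I would note that the two feedback values $p^*_t(x_t|x_{<t})$ and $p^*_t(x_t'|x_{<t})$ share the identical, strictly positive denominator $|\mathcal{S}(\phi,\xi|x_{<t})|$, because the conditioning prefix $x_{<t}$ is common to both. Consequently the comparison of the two probabilities reduces immediately to the comparison of their numerators: $p^*_t(x_t|x_{<t}) \geq p^*_t(x_t'|x_{<t})$ holds if and only if $|\mathcal{S}(\phi,\xi|(x_{<t},x_t))| \geq |\mathcal{S}(\phi,\xi|(x_{<t},x_t'))|$, and the stated implication is one direction of this equivalence.

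The main obstacle is not any substantive computation but rather making the well-definedness bookkeeping in part (1) airtight: the ratio formula presumes a positive denominator at every stage, so I would be careful to justify, via the forward-propagation argument above, that any prefix $x_{<t}$ arising under the hypothesis is extendable to a feasible trajectory, and to pin down the convention that $\mathcal{S}(\phi,\xi|x)$ with a full-length argument is non-empty exactly when $x$ is feasible. Once these conventions are fixed, both claims are immediate corollaries of Theorem~\ref{thm:con}.
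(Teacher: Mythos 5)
Your proof is correct and follows exactly the route the paper intends: the paper states Corollary~\ref{coro:property} as a direct consequence of Theorem~\ref{thm:con}, and your argument simply makes that derivation explicit — part (1) by forward-propagating positivity of the ratio formula until the full-length prefix forces $x\in\mathcal{S}(\phi,\xi)$, and part (2) by cancelling the common denominator $\left|\mathcal{S}(\phi,\xi|x_{<t})\right|$. Your attention to well-definedness of the ratios (nonzero denominators at every stage) is a sound refinement of bookkeeping the paper leaves implicit, not a departure from its approach.
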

We elaborate on the implication of Corollary \ref{coro:property}
on our online feedback-based solution approach.
\begin{remark}[Feasibility and flexibility]
\label{remark:feasibility}
Corollary \ref{coro:property} says that the proposed optimal flexibility
feedback $p^*_t$ provides the right information for the system operator to choose
its signal $x_t$ at time $t$.
Specifically, the first statement of the corollary says that if the operator 
always chooses a signal $x_t$ with positive conditional probability 
$p^*_t(x_t)>0$ for each time $t$, then the resulting signal trajectory is 
guaranteed to be feasible, $x\in\mathcal{S}(\phi,\xi)$, i.e., the system
will remain feasible at \emph{every} time $t$ along the way.  

Moreover, according to the second statement of the corollary, if the system operator
chooses a signal $x_t$ with a larger $p^*_t(x_t)$ value at time $t$, then 
the system will be more flexible going forward than if it had chosen another
signal $x_t'$ with a smaller $p^*_t(x_t')$ value, in the sense that there 
are more feasible trajectories in $\mathcal{S}(\phi,\xi| (x_{<t}, x_t))$ 
going forward.
\qed
\end{remark}
As noted in Remark \ref{remark:online}, despite characterizations 
that involve the whole trajectory
$(x, \phi, \xi)$, such as $x\in\mathcal{S}(\phi,\xi)$, these are \emph{online} properties.
This guarantees the feasibility of the online closed-loop control system 
depicted in Figure \ref{fig:framework}, and confirms the suitability 
of $p^*_t$ for online applications.

\subsection{Axiomatic justification of \eqref{eq:af}}
\label{app:axiom}

As explained in Remark \ref{remark:feasibility}, the optimal flexibility feedback
$p^*_t$ quantifies succinctly for the system operator the future flexibility of the
aggregator that will be enabled by the operator's choice of next signal $x_t$. 
Intuitively, the system has ``more flexibility'' at time $t$ if the distribution $p_t(\cdot|x_{<t})$ is ``more uniform''. 
This view suggests using an entropic measure to quantify flexibility, such 
as the cost function of the optimization problem \eqref{eq:af} that
underlies our proposed flexibility feedback.
In this subsection we justify this intuition using an axiomatic argument. 

Fix any aggregator trajectory $(\phi, \xi)$.
Consider a flexibility metric as a function of any flexibility feedback
$p\in \{p_1,\ldots,p_T\}$. Recall that $p$ is a conditional distribution. For any $p$, let 
$\digamma(p)$ represent a candidate metric for quantifying
aggregate flexibility.  Consider any time slots $\tau\in [\nt]$,  the metric should also be able to provide a value, given the marginal distributions $\overline{p}_{t}:=\sum_{\mathbf{x}_{<t}}p_{t}(\cdot|\mathbf{x}_{<t})\prod_{\tau<t}p_{\tau}(x_\tau|x_{<\tau})$.

%
We require the metric $\digamma$ to satisfy several conditions (axioms):
\begin{enumerate}

\item \emph{Continuity}: $\digamma(\overline{p}_t)$ is a continuous function of $\overline{p}_{t}$, $t\in [\nt]$.

\item \emph{(Strong) additivity}: $\digamma(q) = \sum_{t=1}^{\nt}\digamma(p_t)$
    if $q :=\prod_{t=1}^{\nt}p_t$. 

\item \emph{Subadditivity}: $\digamma(q_{t,t'}) \leq \digamma(\overline{p}_{t}) + \digamma(\overline{p}_{t'})$
    where $\overline{p}_{t}, \overline{p}_{t'}$ are marginal distributions corresponding to time slots $t$ and $t'$ and
    $q_{t,t'}$ is their joint distribution.

\item \emph{Symmetry}: $\digamma(q_{t,t'}) = \digamma(q_{t',t})$ where  $q_{t,t'}$  and  $q_{t',t}$ are joint distributions of time slots $t$ and $t'$.

\item \emph{Expansibility}: $\digamma(\overline{p}'_{t})=\digamma(\overline{p}'_{t})$ for all $\overline{p}_{t}$, $t\in [\nt]$ where $\overline{p}'_{t}=(\overline{p}_{t},0)$, i.e., concatenate a zero entry to $\overline{p}_{t}$. 

\end{enumerate}
Additivity is useful because the tracking of a random signal trajectory
$x := (x_1, \dots, x_T)$ can then be decomposed using the chain rule into sub-problems 
of tracking each signal $x_{t}$ at time $t$, conditioned on previous
signal trajectory $x_{<t}$. 
%
Subadditivity is motivated by the property that fixing a signal $x_t$ may restrict
the choice of feasible signals $x_{t'}$ 
since the signals $x_1,\ldots,x_\nt$ may be correlated.
This means that measuring the joint distribution of $(x_{t}, x_{t'})$ gives
lower flexibility than measuring the coordinates $x_{t}$ and $x_{t'}$ independently. 
%
%
For symmetry, the permutation of components in the distribution $\overline{p}_{t}$ does not change $\digamma(\overline{p}_{t})$ since the switch of positions does not affect the underlying distribution.
Expansibility is natural since adding a new component that equals to zero means $x_t$ can never choose a certain power level. So the aggregate flexibility will not change.

These five conditions imply that the flexibility metric $\digamma(\overline{p}_{t})$ (for all $t\in [\nt]$) must be an entropy function:
\begin{align*}
\mathbb{H}(\overline{p}_{t})&:= \sum_{x\in\mathbb{X}}\sum_{\mathbf{x}_{<t}}p_{t}(x|\mathbf{x}_{<t})\prod_{\tau<t}p_{\tau}(x_\tau|x_{<\tau})\\
&\cdot\log\left(\frac{1}{\sum_{\mathbf{x}_{<t}}p_{t}(x|\mathbf{x}_{<t})\prod_{\tau<t}p_{\tau}(x_\tau|x_{<\tau})}\right)
\end{align*} 
up to multiplicative factors and $\digamma(p_t)$ is the conditional entropy of $p_t$. This is a classical result about entropy; see ~\cite{csiszar2008axiomatic,aczel1974shannon}. 

The results in this section justify the design of using the unique 
optimal solution of ~\eqref{eq:af} as our flexibility feedback $p^*_t$.
The design attains the system capacity ${\digamma}(\phi,\xi)$. Moreover
it characterizes the aggregate flexibility in real-time and 
allows a decomposition (see Section \ref{sec:estimation} for details) of aggregate flexibility over $t$ via
\begin{align*}
    \sum_{t=1}^{\nt}\mathbb{H}\left(p^*_t\right) = {\digamma}(\phi,\xi).
\end{align*}
We use this decomposition in Section~\ref{sec:online} for online cost minimization 
where $p^*_t$ is used as a penalty in a RHC-based online algorithm.
We also use it in Section \ref{sec:real_time_eval} for estimating the system
capacity ${\digamma}(\phi,\xi)$ empirically using a Monte Carlo method.
Finally, computing the optimal flexibility feedback is demanding.  
We provide two approximations for $p^*_t$, one for the case where
sufficient historical data is available and the other when it is not. The first is a data-driven approach using reinforcement learning (Section~\ref{sec:RL}) and the second is a look-ahead approximation (Section~\ref{sec:look-ahead}).





\section{Online cost minimization}
\label{sec:online}

Consider the cost minimization problem introduced in Example~\ref{example:online_optmization}.  In this setting, the operator seeks to minimize the cost in an online manner, i.e., at time $t$ the operator only knows the objective functions $f_1,\ldots,f_t$ and the flexibility feedback $p_1,\ldots,p_t$.

We first describe a receding horizon control scheme for the operator that, given the flexibility feedback and the objective functions, allows the operator to compute the signals $x_1,\ldots,x_{\nt}$.  Then, we introduce a deep reinforcement learning-based approach for the aggregator to compute an approximation of the optimal flexibility feedback.  Finally, we illustrate our method with simulations.

\subsection{Operator: Receding horizon control}

The task of the operator is to, given the optimal flexibility feedback, generate signals $x_1,\ldots,x_{\nt}$ that are always feasible with respect to both the sets of private and operational constraints \emph{and} that minimize cost.  For the objective of cost minimization, we propose an approach that uses receding horizon control (RHC) to achieve this in an adaptive, online manner -- see Algorithm \ref{alg:online}.

\begin{algorithm}[t!]
\small
	\hrule
	\hrule
	\medskip
	\KwData{Sequential cost functions $f_1,\ldots, f_{\nt}$ and states $\xi_1,\ldots,\xi_{\nt}$}
 	\KwResult{Total cost $\sum_{t=1}^{\nt} f_t(x_t)$}
 		\For{$t\in [\nt]$}{
 				
		{Generate flexibility feedback:}
		\begin{align*}
		  \overline{p}_{t} = \psi^{\mathrm{SAC}}_{t}(x_{<t};\xi_{t})
		\end{align*}
 		
		Generate control signal and compute cost: 
		\begin{align*}
		 x_{t} =& \pi_{t}^{\mathsf{RHC}}(\overline{p}_{t})\\
		 \mathsf{cost} =& \mathsf{cost} + f_{t}(x_{t})
		\end{align*}
		
		{Update state:}
		
		{\quad $\xi_{t+1}(x_t;\xi_{t}) \longleftarrow \xi_{t}$}

		}\Return{$\mathsf{cost}$}
	\medskip
	\hrule
	\hrule
	\medskip
	\caption{The RHC scheme for online cost minimization.}
		\label{alg:online}
\end{algorithm}

We focus on a specific class of constraints and assume the operational constraints $h_i\left(\zeta,x\right) \leq  0, \ i=1,\ldots, k$ can be decoupled as ($t\in [\nt]$):
\begin{align}
\label{eq:decomposed_constraints}
    h^{(t)}_{i}\left(\zeta,x_t\right)\leq 0, \ i=1,\ldots,k_t.
\end{align}
First, we consider the following equivalent offline optimization of~\eqref{eq:offline}. 
Recall $q^*$ from~\eqref{eq:uniform}. 
\begin{align}
\label{eq:objective}
\min \sum_{t=1}^{\nt} &f_t(x_t) - \beta\log q^*(x)\\
\nonumber
\text{subject to } & h_i\left(\zeta,x\right) \leq  0, \ i=1,\ldots, k,
\end{align}
where $\beta>0$ is a tuning parameter. Decomposing the joint distribution $q^*(x)=\prod_{t=1}^{\nt}p^*_t(x_t|x_{<t})$ by the optimal conditional distributions given by~\eqref{eq:af}, the objective function~\eqref{eq:objective} becomes
\begin{align}
\nonumber
&\sum_{t=1}^{\nt} f_t(x_t) - \beta\log \left(\prod_{t=1}^{\nt} p_t^*(x_t|x_{<t})\right)\\
\label{eq:decomposition}
=&\sum_{t=1}^{\nt} \left(f_t(x_t) - \beta\log p_t^*(x_t|x_{<t})\right).
\end{align}
 
Eq.~\eqref{eq:decomposition} motivates the following RHC-based operator function at time $t\in [\nt]$, which includes the flexibility feedback $p_t(\cdot|x_{<t})$ as a penalty term in a greedy minimization: 
\begin{align}
\label{eq:RHC}
\pi_t^{\mathsf{RHC}}:=&\argmin_{x\in\mathbb{X}} f_t(x)-\beta\log {p}_t(x|x_{<t}) \\
&\text{subject to }~\eqref{eq:decomposed_constraints}.
\end{align}
Crucially, the following shows that feasibility is guaranteed when the flexibility feedback is optimal.
\begin{corollary}
\label{coro:feasibility}
Suppose at each time $t\in [\nt]$, the optimal flexibility feedback $p^*_t(\cdot|x_{<t})$ is sent to an operator constrained by~\eqref{eq:decomposed_constraints}. Then, the trajectory $x=(x_1,\ldots,x_{\nt})$ generated by the RHC-based operator function $\pi_t^{\mathsf{RHC}}$ in~\eqref{eq:RHC} is always feasible, \textit{i.e.,} $x\in\mathcal{S}(\phi,\xi)$.
\end{corollary}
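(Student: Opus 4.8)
The plan is to use the logarithmic penalty $-\beta\log p_t(x|x_{<t})$ in \eqref{eq:RHC} as a barrier that forbids the operator from ever selecting a signal of zero conditional probability, and then to invoke the first part of Corollary \ref{coro:property}.

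First I would record the elementary observation that, since $\beta>0$ and $f_t$ is finite-valued on the discrete set $\mathbb{X}$, the RHC objective $f_t(x)-\beta\log p^*_t(x|x_{<t})$ equals $+\infty$ exactly when $p^*_t(x|x_{<t})=0$ and is finite otherwise. Consequently, whenever the constrained minimization in \eqref{eq:RHC} admits at least one operationally feasible signal of positive conditional probability, its minimizer $x_t=\pi_t^{\mathsf{RHC}}(p^*_t)$ must itself satisfy $p^*_t(x_t|x_{<t})>0$, since a finite objective always beats an infinite one.

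I would then run an induction on $t$ to ensure a finite-objective candidate always exists along the generated trajectory. For the base case, $|\mathcal{S}(\phi,\xi|x_{<1})|=|\mathcal{S}(\phi,\xi)|>0$ by the standing non-emptiness assumption. For the inductive step, suppose $p^*_\ell(x_\ell|x_{<\ell})>0$ for all $\ell<t$; by the closed form in Theorem \ref{thm:con}, positivity of $p^*_{t-1}(x_{t-1}|x_{<t-1})$ is equivalent to $|\mathcal{S}(\phi,\xi|x_{<t})|>0$, so the set of subsequent feasible trajectories conditioned on $x_{<t}$ is non-empty, and the first coordinate of any of its elements is a signal of positive conditional probability. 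Combined with the base observation, this forces $p^*_t(x_t|x_{<t})>0$ and closes the induction. Pushing through to $t=\nt$ gives $p^*_t(x_t|x_{<t})>0$ for every $t\in[\nt]$, whereupon the first part of Corollary \ref{coro:property} yields $x\in\mathcal{S}(\phi,\xi)$.

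The step I expect to be the main obstacle is the coupling between the operational constraints \eqref{eq:decomposed_constraints} and private feasibility inside the induction: the positive-probability signal exhibited above need not satisfy \eqref{eq:decomposed_constraints}, so I must guarantee that the operationally feasible signals of positive conditional probability form a non-empty set at every step. This is precisely the assumption that the RHC subproblem \eqref{eq:RHC} attains a finite minimum, i.e., that the closed loop is recursively feasible; absent it, a purely greedy operator can drive the system into a state whose only operationally admissible continuations carry zero private probability, breaking feasibility. I would therefore either make this recursive feasibility an explicit hypothesis or exhibit structural conditions on \eqref{eq:decomposed_constraints} (e.g., that they never exclude all positive-probability continuations) under which it holds automatically; the remainder is then the routine barrier-plus-induction reasoning above.
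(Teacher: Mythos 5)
Your proposal takes essentially the same route as the paper's own proof: the logarithmic term acts as a barrier, so the minimizer of \eqref{eq:RHC} must carry positive conditional probability whenever a finite-objective candidate exists in the constraint set, and then part 1 of Corollary \ref{coro:property} yields $x\in\mathcal{S}(\phi,\xi)$. The paper argues exactly this, by contradiction, but much more tersely: it asserts that $\mathcal{S}(\phi,\xi)\neq\emptyset$ makes $p^*_t(\cdot|x_{<t})$ non-all-zero and concludes immediately.

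Your two refinements are not detours; they fill real holes. First, the induction on $t$ is actually needed: non-emptiness of $\mathcal{S}(\phi,\xi)$ directly shows only that $p^*_1$ is not all-zero, while for $t>1$ one needs $\left|\mathcal{S}(\phi,\xi|x_{<t})\right|>0$, which follows (via the closed form in Theorem \ref{thm:con}) precisely from the positivity of the previously chosen signals --- the paper leaves this step implicit. Second, the coupling with the operational constraints \eqref{eq:decomposed_constraints} is a gap in the paper's proof as well, not only in yours: the contradiction step needs some signal of positive conditional probability to also be operationally admissible, since only such a signal can beat an infinite objective \emph{inside} the constrained minimization \eqref{eq:RHC}. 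Without that, the corollary as stated can fail: with $T=1$, $\mathbb{X}=\{0,1\}$, $\mathcal{S}(\phi,\xi)=\{(1)\}$ and a peak constraint $x_1\leq 0$, the RHC problem returns $x_1=0\notin\mathcal{S}(\phi,\xi)$. So your proposed repair --- making recursive feasibility (the operational constraints never exclude all positive-probability signals) an explicit hypothesis --- is exactly the assumption the paper's proof uses silently, and stating it is the correct way to close the argument.
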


\begin{proof}
Applying Theorem~\ref{coro:property}, it suffices to show that the signal $x_t = \pi_t^{\mathsf{RHC}}(p^*_{t}(\cdot|x_{<t}))$ generated at time $t$  satisfies
$
p^*_t(x_t|x_{<t})>0.
$
Suppose not, then there is a control signal $x^*=\pi_t^{\mathsf{RHC}}(p^*_{t}(\cdot|x_{<t}))$ such that $p^*_t(x|x_{<t})=0$ for some $t$ implies the objective in~\eqref{eq:RHC} becomes positive infinity. Our assumption $\mathcal{S}(\phi,\xi)\neq\emptyset$ implies that $p^*_t(\cdot|x_{<t})$ is not an all-zero vector. Therefore, $x$ is not the optimal solution of~\eqref{eq:RHC}, yielding a contradiction.
\end{proof}

\subsection{Aggregator: Data-driven approximation of the optimal flexibility feedback}
\label{sec:RL}

As we have already noted, computing the optimal flexibility feedback is computationally intensive. Thus, instead of computing it precisely, it is desirable to approximate it.  For the case of online cost minimization, it is possible to take a data-driven approach.  In particular, we propose the use of reinforcement learning to learn a function ${\psi}_t:{\Omega}\rightarrow\mathcal{P}$ that outputs the estimated flexibility feedback $\overline{p}_t$ given the current system state ${\xi}_{t}$. Note that we do not directly learn the disaggregation of $x_t$, which would have too large an action space. Instead, we fix a specific scheduling algorithm and learn the feedback vectors directly. This is another benefit of the concise representation of the feedback vectors.  

More specifically, we train an agent function ${\psi}_t$ using soft actor-critic (SAC)~\cite{haarnoja2018soft}, with the following generic reward function $r:\Omega^{t}\times \mathbb{X}^{t}\times\mathcal{P}\rightarrow \mathbb{R}$:
\begin{align}
\label{eq:reward}
r(\xi_{\leq t},\mathbf{x}_{\leq t},\mathbf{p}_t) = &\mathbb{H}(\mathbf{p}_t) - \sum_{i=1}^{m} c_i \min_{\mathbf{x}_{>t}} \left[g_i(\mathbf{x}; \phi,\xi_{\leq t})\right]_{+}.
\end{align}


The first term maximizes the entropy of the flexibility feedback vector, as a heuristic for the objective in~\eqref{eq:af1}. The second term penalizes the choice of $x_t$ that leads to an infeasible trajectory. Note that the reward function is independent of the price functions.  We provide more details in Appendix~\ref{app:learning}. 
We next demonstrate in simulations that feeding back to the operator the approximate optimal flexibility obtained from reinforcement learning is sufficient for achieving the desirable properties proven in Section \ref{sec:feedback}. 


\subsection{Experiments}

In the following, we show our experimental results for online EV charging, using real EV charging data
ACN-Data~\cite{lee2019acn}, which is a dataset collected from adaptive EV charging networks (ACNs) at Caltech and JPL. The detailed choices of SAC parameters and the design of the reward function for the SAC approach are presented in Appendix~\ref{app:learning}.


\begin{figure}[h]
    \centering
    \includegraphics[scale=0.6]{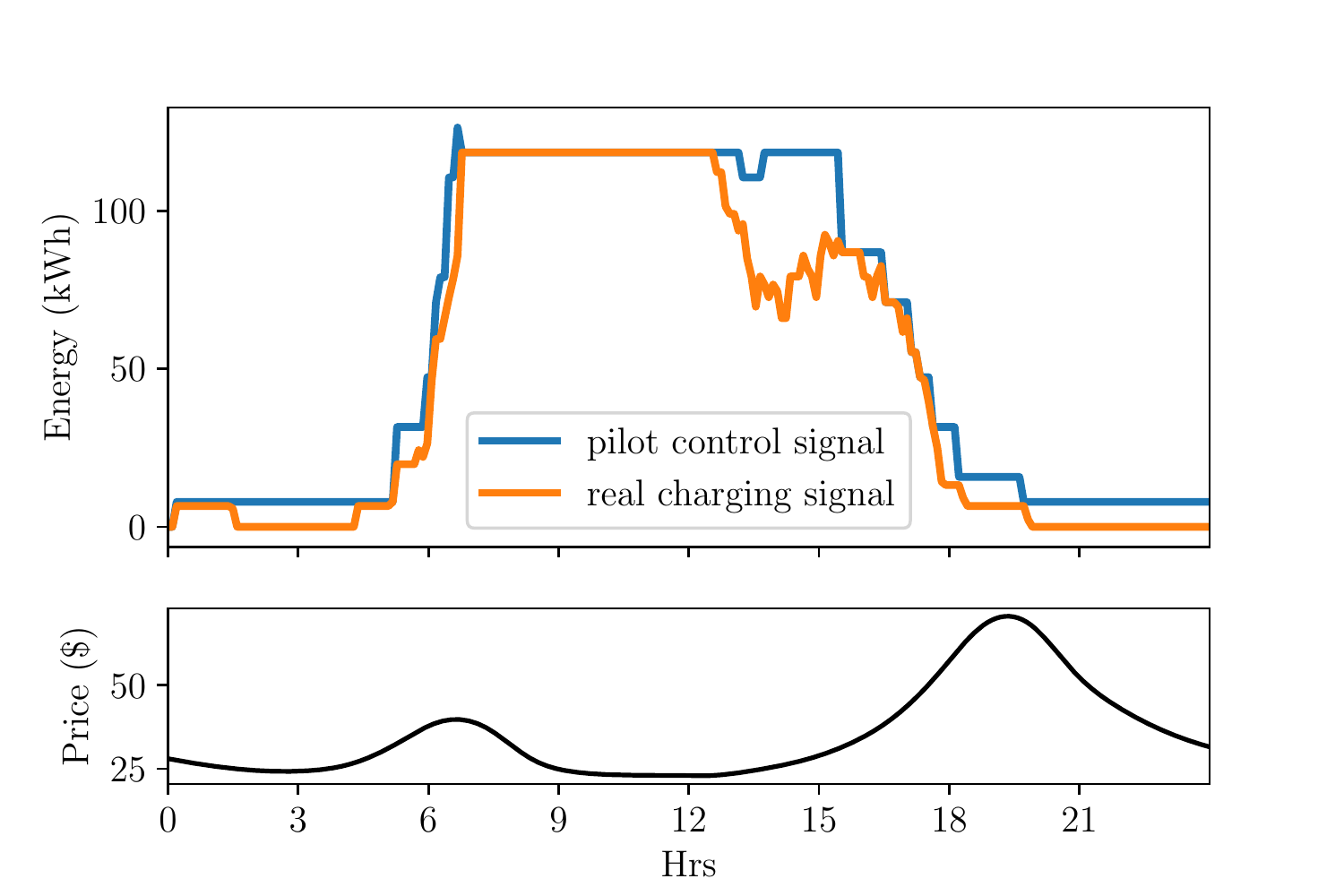}
    \caption{Pilot control signals and real energy allocated to EVs generated by Algorithm~\ref{alg:online}. }
    \label{fig:charging}
\end{figure}



{\bf Charging curves.} In Figure~\ref{fig:charging}, pilot control and real energy  signals are shown. The agent is trained on data collected at Caltech from Nov. 1, 2018 to Dec. 1, 2019 with linear price functions $f_t=1-t/24$, where $t\in [0,24]$ (unit: Hrs) is the time index and tested on Dec. 18, 2019 for JPL with average LMPs on the CAISO (California Independent System Operator) day-ahead market in 2016, shown on the bottom. The scheduling policy is fixed to be LLF (see Appendix~\ref{sec:monotonicity} for more details).  The set of power levels $\mathbb{X}$ is a discrete set that contains $60$ distinct power levels from $0$ kWh to $360$ kWh. We use tuning parameter $\beta=4000$. The pilot control signals are optimal solutions of~\eqref{eq:RHC}, which are always bounded from below by the real charging signals, representing the aggregate charging rates $\sum_i \phi_i(t)$ for $t\in [0,24]$. The figure highlights that, with a suitable choice of tuning parameter, the operator is able to schedule charging at time slots where prices are lower and avoid charging at the peak of prices, as desired. Note that the operational constraints used in this experiments is $x_t\leq 150$ (kWh) for every $t\in [\nt]$ and the learned flexibility feedback is able to automatically flatten the charging curve within this range, without explicitly knowing it.

\section{System capacity ${\digamma}(\phi,\xi)$ estimation}
\label{sec:real_time_eval}

In addition to minimizing cost, another important goal of the operator is to quantify the amount of flexibility available at each time. This is crucial for purposes of ensuring the ability to respond to failures and planning for capacity investment.  However, given that the private constraints of loads are not visible to the operator, such estimation is challenging.  Further, measuring the exact size of $\mathcal{S}(\phi,\xi)$ is intractable even if such constraints were visible, since the subset in $\mathbb{R}^{\nt}$ specified by inequalities~\eqref{eq:2.2} can be non-convex and even computing the volume of a convex body can be a hard problem~\cite{simonovits2003compute}. Furthermore, since a system's states are time-variant, the flexibility of the system also changes over time.

In this section, we illustrate how the optimal flexibility feedback can be used to estimate the system capacity ${\digamma}(\phi,\xi)$.
To this end, we propose an empirical estimation of the system capacity that uses an approximation of the optimal flexibility feedback and demonstrate 
our method using a case study of EV charging.

\subsection{Operator: Monte Carlo estimation}
\label{sec:estimation}

The task of the operator is to, given the optimal flexibility feedback, estimate the system capacity ${\digamma}(\phi,\xi)$ while also generating signals $x_1,\ldots,x_{\nt}$ that are always feasible with respect to both the operational
constraints \eqref{eq:offline_3} and the private aggregator constraints 
\eqref{eq:offline_2}.  
The approach we propose is an empirical estimation of the system capacity using Monte Carlo estimation. In particular, we consider
\begin{align}
\label{eq:em}
\mu_{N}(\phi,\xi):=
\frac{1}{N}\sum_{\ell=1}^{N}\sum_{t=1}^{\nt}\mathbb{H}\left(p_t(\cdot|x_{<t}(\ell))\right),
\end{align}
where the summation is over $\nt$ discrete time slots and $N$ trajectories.  For each, the corresponding entropy function computes the entropy of the flexibility feedback vector $p_t$ conditioned on the generated signals $x_{<t}(\ell)$) at each time $t\in [\nt]$:
\begin{align*}
\mathbb{H}\left(p_t(\cdot|x_{<t}(\ell))\right):=  -\sum_{x\in\mathbb{X}}p_t\left(x|x_{<t}(\ell)\right)\log{p_t\left(x|x_{<t}(\ell)\right)}.
\end{align*}

The goal of this approach is that, with suitable choices of operator functions
$\pi_t$, when the number $N$ of sampled trajectories becomes large, the approximation converges to the system capacity ${\digamma}(\phi,\xi)$. To see why, suppose at each time $t\in [\nt]$, the operation $\pi_t^{\mathsf{OPT}}$ is a stochastic function that samples a signal $X_t$ according to the optimal flexibility feedback $p^*_t$, i.e., for all $t\in [\nt]$ and $x_t\in \mathbb{X}$, $$\mathbb{P}\left(\pi_t^{\mathsf{OPT}}\left(p^*_t(\cdot|x_{<t}\right)=x_t\right)=p_t^*(x_t|x_{<t}).$$  In this context, the theorem below shows that we obtain an estimate of the system capacity ${\digamma}(\phi,\xi)$ using Monte Carlo estimation. 

\begin{theorem}
\label{thm:capacity_estimation}
If the $N$ trajectories $\{(x_1(\ell),\ldots,x_{\nt}(\ell))\}_{\ell=1}^{N}$ are generated i.i.d. by $\{\pi_1^{\mathsf{OPT}}, \ldots, \pi_{\nt}^{\mathsf{OPT}}\}$, then the empirical estimate in~\eqref{eq:em} converges to the system capacity almost surely, \textit{i.e.},
\begin{align*}
\mu_{N}(\phi,\xi)\xrightarrow{a.s.} {\digamma}(\phi,\xi) \text{ as } N\rightarrow\infty.
\end{align*}
\end{theorem}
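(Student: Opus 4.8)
The plan is to recognize the empirical estimate $\mu_N(\phi,\xi)$ as a sample average of i.i.d.\ bounded random variables, invoke the strong law of large numbers (SLLN), and then identify the resulting expectation with the system capacity via Theorem~\ref{thm:con}. First I would fix notation: for each sampled trajectory $\ell$, write $S(\ell) := \sum_{t=1}^{\nt}\mathbb{H}\left(p^*_t(\cdot|x_{<t}(\ell))\right)$, so that $\mu_N(\phi,\xi) = \tfrac{1}{N}\sum_{\ell=1}^N S(\ell)$. Since the trajectories $x(\ell)$ are drawn i.i.d.\ by $\{\pi_1^{\mathsf{OPT}},\ldots,\pi_{\nt}^{\mathsf{OPT}}\}$ and each $S(\ell)$ is a fixed deterministic function of $x(\ell)$, the variables $\{S(\ell)\}_{\ell=1}^N$ are i.i.d. Because $\mathbb{X}$ is finite, each conditional entropy satisfies $0 \leq \mathbb{H}(p^*_t(\cdot|x_{<t}(\ell))) \leq \log|\mathbb{X}|$, hence $0 \leq S(\ell)\leq \nt\log|\mathbb{X}| < \infty$ and in particular $\mathbb{E}|S(1)| < \infty$. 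The SLLN then yields $\mu_N(\phi,\xi) \xrightarrow{a.s.} \mathbb{E}[S(1)]$.

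It remains to show $\mathbb{E}[S(1)] = {\digamma}(\phi,\xi)$. By linearity of expectation, $\mathbb{E}[S(1)] = \sum_{t=1}^{\nt}\mathbb{E}\big[\mathbb{H}(p^*_t(\cdot|X_{<t}))\big]$, where the expectation is taken over the random prefix $X_{<t}$. The key observation is that, because $\pi_t^{\mathsf{OPT}}$ samples $X_t$ according to $p^*_t(\cdot|X_{<t})$, the induced joint law of the entire trajectory is exactly $q^* = \prod_{t=1}^{\nt}p^*_t$, so the prefix $X_{<t}$ is distributed according to the time-$(t-1)$ marginal of $q^*$, namely $\prod_{\ell=1}^{t-1}p^*_\ell(x_\ell|x_{<\ell})$. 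Substituting this marginal into the definition of $\mathbb{H}(p^*_t(\cdot|X_{<t}))$ and expanding the expectation over $X_{<t}$ reproduces exactly the conditional-entropy formula~\eqref{eq:ent}, giving $\mathbb{E}\big[\mathbb{H}(p^*_t(\cdot|X_{<t}))\big] = \mathbb{H}(X_t|X_{<t})$. Summing over $t$ and applying the chain rule gives $\mathbb{E}[S(1)] = \sum_{t=1}^{\nt}\mathbb{H}(X_t|X_{<t})$, which by Theorem~\ref{thm:con} equals the optimal value ${\digamma}(\phi,\xi)=\log|\mathcal{S}(\phi,\xi)|$. Combining this with the SLLN limit completes the argument.

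The routine parts (boundedness, the i.i.d.\ structure, and the SLLN step) are immediate, so I do not expect them to present difficulty. The only step requiring genuine care is verifying that the sampling distribution of the prefixes matches the marginal of $q^*$ used in the definition of the conditional entropy, so that the empirical average is asymptotically unbiased for $\mathbb{H}(X_t|X_{<t})$ rather than for some mismatched expectation. This consistency between the sampling policy $\pi^{\mathsf{OPT}}$ and the conditional distributions $p^*_t$ appearing in the summand is precisely what ties the Monte Carlo estimate to the system capacity, and is the crux of the proof; everything else is bookkeeping around the SLLN.
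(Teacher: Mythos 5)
Your proposal is correct and follows essentially the same route as the paper: recognize $\mu_N$ as an i.i.d.\ sample average, identify the per-trajectory expectation $\mathbb{E}[S(1)]$ with the objective of~\eqref{eq:af1} evaluated at the optimal feedback (hence with ${\digamma}(\phi,\xi)$ via Theorem~\ref{thm:con}), and conclude by the law of large numbers. Your write-up is in fact somewhat more careful than the paper's, which omits the boundedness/integrability check and the explicit verification that the sampling law of the prefixes matches the marginal of $q^*$, but these are exactly the "bookkeeping" steps you flagged, and the core argument is identical.
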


Note that, in addition to providing a method for estimating the system capacity, the theorem also validates that the entropy of the flexibility feedback sent each time reflects the system's current flexibility. This indicates that, for instance, if the feedback vector is a uniform distribution on $\mathbb{X}$, then the system has maximal flexibility.

\begin{proof}[Proof of Theorem \ref{thm:capacity_estimation}]
Suppose $N$ trajectories $\{x(1),\ldots,x(N)\}$ are sampled i.i.d.\ according to the optimal flexibility feedback.
Equivalently, for all $\ell=1,\ldots, N$, the entropy of the optimal flexibility feedback $p^*_t(x_{<t}(\ell))$ can be written as the following conditional entropy
$
\mathbb{H}\left(p_t(\cdot|x_{<t}(\ell))\right) = \mathbb{H}\left(X_{t}|X_{<t}=x_{<t}\right),
$
where each $X_t\in \mathbb{X}$ is a random signal drawn according to $p^*_t(x_{<t}(\ell))$.
We claim that, if the random power signal $X_{t}$ is sampled according to $p_t^*(\cdot|x_{<t})$ conditioned on previous power signals $x_{<t}=(x_1,\ldots,x_{t-1})$ for all $t\in[\nt]$, then the accumulated flexibility over $t\in[\nt]$ is equal to the system capacity ${\digamma}(\phi,\xi)$ in expectation,
\begin{align}
\label{eq:real_flexibility}
\mathbb{E}_{Y}\left[\sum_{t=1}^{\nt}\mathbb{H}\left(X_{t}|X_{<t}=Y_{<t}\right)\right] = {\digamma}(\phi,\xi)
\end{align}
where the expectation is taken over the randomness of the signal trajectory $Y$ that has the same distribution as $X$.
The equality in~\eqref{eq:real_flexibility}  follows by noticing that the left hand side equals to the objective function in~\eqref{eq:af1}, with the flexibility feedback there at each time $t\in [\nt]$ being optimal. 
Noting that the expectation in~\eqref{eq:real_flexibility}  equals to ${\digamma}(\phi,\xi)$, the law of large numbers implies the theorem.
\end{proof}

\subsection{Aggregator: Look-ahead approximation of the optimal flexibility feedback}
\label{sec:look-ahead}

As we have discussed, computing the exact optimal flexibility feedback vectors $p_1^*,\ldots,p_{\nt}^*$ is computationally intensive and so approximations are desirable.  In Section~\ref{sec:RL} we have presented a data-driven approach for estimation via reinforcement learning.  Here, we take a different approach based on looking ahead rather than referring to historical data.  This approach is preferable in highly non-stationary situations. The approximation is presented in Appendix~\ref{app:approximation}. Notably, one may wonder if sending approximately optimal flexibility feedback to the operator is sufficient for achieving the desirable properties discussed in Section \ref{sec:feedback}.  In fact, it is and the results can be extended to hold for approximately optimal flexibility feedback computed as described above.  Perhaps the most important of these properties is feasibility, and so we provide a detailed discussion of the extension for feasibility in Appendix \ref{app:feasibility}.

\vspace{-0.2pt}

\subsection{Experiments}
\label{sec:flexibility_evaluation}

In our experiments, we apply Monte Carlo estimation and look-ahead approximation to the 
ACN-Data~\cite{lee2019acn}.  

\textbf{System capacity ${\digamma}(\phi,\xi)$ estimation.} Figure \ref{fig:year} shows the estimated $365$-day (average) system capacities $\mu_{N}(\phi,\xi)$ in (\ref{eq:em}) calculated by Monte Carlo estimation using the look-ahead approximation 
with $N=5$, $k=1$ and $\nt=240$
from Sep. 1, 2018 to Aug. 31, 2019. We use parameters that match the setup of the garage. The total number of charging stations is fixed as $54$, with peak power rate $6.6$ kWh. The set of power levels $\mathbb{X}$ is a discrete set that contains $60$ distinct power levels from $0$ kWh to $360$ kWh (for the definition of the parameters, see Appendix~\ref{app:approximation}). Note that, corresponding to this setting, in the case that every power trajectory in the length-$240$ time horizon is feasible, the maximal system capacity is $240\times\log 60 \approx 983$. 

An interesting observation from this figure is that, although there are fewer users after Nov. 1, 2018 (because of switching from free-charging to paid-charging), there is no significant decrease of system capacity. Additionally, notice that there is a decline of users during the holidays, and therefore total flexibility drops during the Christmas season. 

\begin{figure}[h]
	\centering
	\includegraphics[width=1\linewidth]{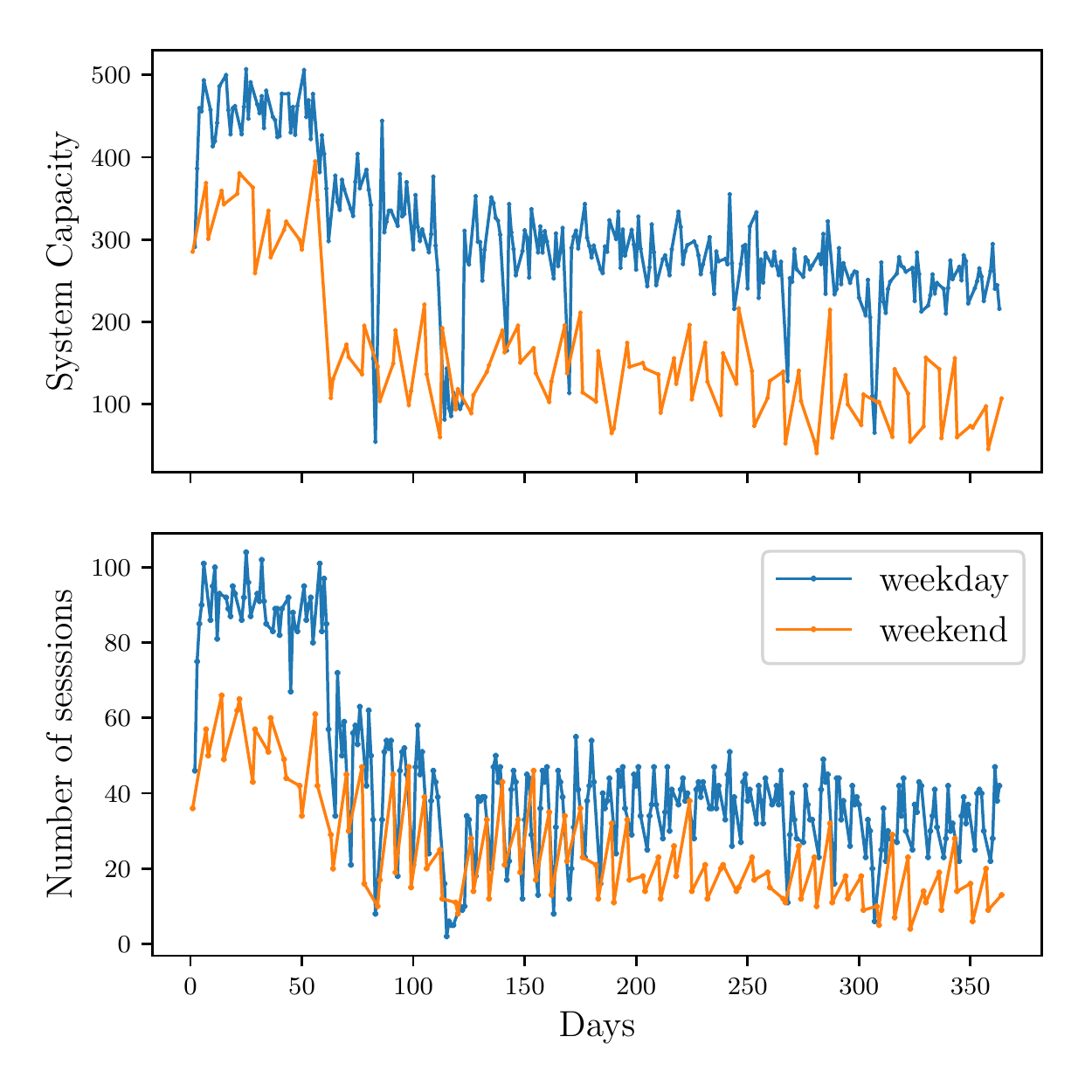}
	\caption{Estimated system capacities at Caltech with LLF compared with the number of charging sessions over a year, from Sep. 1, 2018 to Aug. 31, 2019. Weekends and weekdays are separated.}
	\label{fig:year}
	\medskip
\end{figure}

\textbf{Real-time flexibility feedback.}
Now, let us study the quality of the real-time flexibility feedback. Eq.~\eqref{eq:real_flexibility} gives the desired decomposition of system capacity, which enables us to characterize the spectrum of flexibility fluctuations. We show experimental results for real-time flexibility by considering a charging system within a single day. We use the same setting of parameters as described in Fig~\ref{fig:year}. 


\begin{figure}[t]
    \centering
    \includegraphics[scale=0.6]{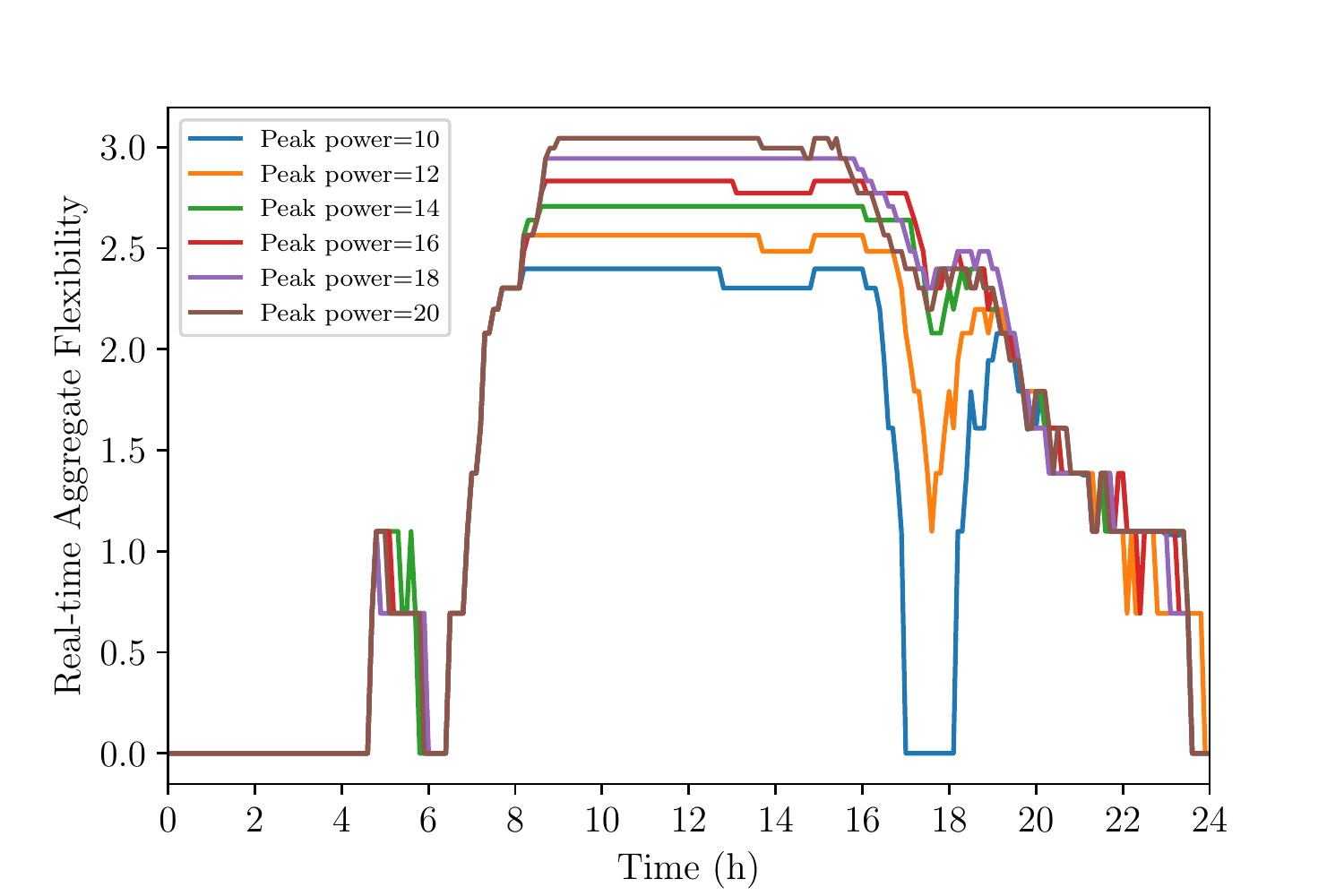}
    \caption{Impact of peak shaving constraints on real-time flexibility. We vary the peak power limit $\overline{\gamma}$(kWh) on Sep. 4, 2018.} 
    \label{fig:ope}
\end{figure}

We consider the case when operational constraints present, and the operator seeks to perform peak shaving. 
In Figure~\ref{fig:ope}, we vary the peak power limit defined in Example~\ref{example:operational_constraints} and it shows that the smaller the limit is set to be, the lower real-time aggregate flexibility the system has. Note that summing the real-time aggregate flexibility over time estimates the system capacity. Therefore a sharper limit induces a lower system capacity. Supplementary experimental results can be found in Appendix~\ref{app:experiments}.

\section{Concluding remarks}

This paper formalizes and studies the closed-loop control framework created by the interaction between a system operator and an aggregator.  Our focus is on the feedback signal provided by the aggregator to the operator that summarizes the real-time availability of flexibility among the loads controlled by the aggregator.  We present the design of an optimal flexibility feedback signal based on entropic maximization.  We prove a close connection between the optimal flexibility feedback signal and the system capacity, and show that when the signal is used the system operator can perform online cost minimization and system capacity estimation while provably respecting the private constraints of the loads controlled by the aggregator.  Further, we illustrate the effectiveness of these designs using simulation experiments of an EV charging facility.  

There is much left to explore about this optimal flexibility feedback signal presented in this work.  In particular, computing it is computationally intensive and we have presented two approaches for estimation.  Improving these and developing other approximations is of particular interest.  Further, exploring the use of flexibility feedback for operational objectives beyond cost minimization and capacity estimation is an important goal.  Finally, exploring the application of flexibility feedback in other settings, such as frequency regulation and real-time pricing, is exciting.

\begin{acks}
This work is supported by NSF through grants CCF 1637598, ECCS 1619352, ECCS 1931662, CPS ECCS 1739355, CPS ECCS 1932611.
\end{acks}
\newpage

\newpage

\bibliographystyle{ACM-Reference-Format}
\bibliography{final_lib}

\newpage

\appendix


\section{Learning flexibility feedback}
\label{app:learning}

Soft actor-critic (SAC)~\cite{haarnoja2018soft} is an off-policy maximum entropy deep reinforcement learning algorithm, which in many complicated learning scenarios (such as control of humanoid robotics) outperforms deep deterministic policy gradient (DDPG) approaches, especially when the action space is a continuous and high-dimensional. The policy in our experiments is fixed to be a parameterized family of Gaussian distributions.

\subsection{Approximate agent}

We train an agent $\psi^{\mathrm{SAC}}_t:\overline{\Omega}\rightarrow\mathcal{P}$ using SAC whose input at time $t\in [\nt]$ is a state parameter $\xi_t$ that encodes the remaining energy to be delivered and the remaining charging time for the EV being charged at each station $i\in [\ns]$ and time $t\in [\nt]$, denoted by $\psi^{\mathrm{SAC}}_t(i)$. Knowing the states $\xi_{<t}$, scheduling algorithm $\phi$ and the signals $x_{<t}$ gives the state $\xi_t$. 

\subsection{Parameters in our experiments}

In the experiments, the state space is $\mathbb{R}_{+}^{2\times{\ns}}$ where ${\ns}$ is the total number of charging stations and a state vector for each charging station is $(e_t,[d(j)-t]^{+})$, i.e., the remaining energy to be charged and the remaining charging time if it is being used; otherwise the vector is an all-zero vector. The action space is $\mathbb{R}_{+}^{|\mathbb{X}|}$. Moreover, the outputs of the neural networks are normalized into the probability simplex $\mathcal{P}$ afterwards. Hyper-parameters in our experiments are shown in Table~\ref{table:parameter}.
\begin{table}[h]
    \centering
    \begin{tabular}{l|l}
    \hline
    \hline
    \multicolumn{2}{c}{Soft actor-critic} \\
        \hline
      Parameter & Value\\
      optimizer & Adam~\cite{kingma2014adam}\\
      learning rate & $3\cdot 10^{-4}$\\
      discount ($\gamma$) & $0.5$\\
      relay buffer size & $10^6$\\
      number of hidden layers & $2$\\
      number of hidden units per layer & $256$\\
      number of samples per minibatch & $256$\\
      non-linearity & ReLU\\
      temperature parameter ($\alpha$) & $0.5$\\
      \hline
     \multicolumn{2}{c}{Markov decision process} \\
     \hline
     power levels ($\mathbb{X}$) & $\{1,2,\ldots,20\}$  (in kWh) \\
     number of stations (${\ns}$)  & Caltech (54) / JPL (52)\\
     state space & $\mathbb{R}_{+}^{2\times {\ns}}$\\
     action space & $[0,1]^{|\mathbb{X}|}$\\
     reward $r_{\mathrm{EV}}(\xi_t,p_t)$ & $\sigma_1=0.1$, $\sigma_2=0.2$, $\sigma_3=2$\\
     time interval ($\Delta$) & $6$ minutes\\
     operational constraints & $x_t\leq 150$ (kWh), \ $\forall \ t\in [\nt]$\\
      \hline
      \hline
    \end{tabular}
   \caption{Hyper-parameters in the experiments.}
  \label{table:parameter}
\end{table}

\subsection{Reward function in training}

For the deferrable loads in Example~\ref{example:ev}, once constraints are violated, they can on longer be satisfied by future decisions.  Therefore, the minimization can be removed and we have the following specific reward function for EV charging scenario:
\begin{align}
\nonumber
r_{\mathrm{EV}}(\xi_{\leq t},p_t) = &\mathbb{H}(p_t) + \sigma_1 \left|\left|\phi_t(\xi_{\leq t},x_{\leq t})\right|\right|_2\\
\label{eq:reward_EV}
-\sigma_2& \Big[e(j) - \sum_{t=1}^{\nt}\phi_t(j) \Big]_{+}-\sigma_3\Big|\pi_t^{\mathsf{RHC}}(p_t) - \sum_{j=1}^{\nn}\phi_t(j)\Big|
\end{align}
where $\sigma_0, \sigma_1,\sigma_2$ and $\sigma_3$ are positive constants. The second term is to enhance charging performance and the last two terms are realizations of the last term in~\eqref{eq:reward} for constraints~\eqref{eq:f3} and~\eqref{eq:f2}. The other constraints in Example~\ref{example:ev} can automatically be satisfied by enforcing the constraints in the fixed scheduling algorithm $\phi$.

With the settings described above, in Figure~\ref{fig:reward} we show a typical training curve of the reward function in~\eqref{eq:reward_EV}. The constants in~\eqref{eq:reward_EV} are $\sigma_1=0.1$, $\sigma_2=0.2$ and $\sigma_3=2$.


\section{Look-ahead approximation}
\label{app:approximation}
Before presenting the design, we first introduce some generalized notation for feasible power levels that, that extends ~\eqref{eq:set_of_feasible} to the case of $k$-step look-ahead. 
\begin{align*}
\mathcal{S}_k(\phi,\xi|x_{<t}):=
\Big\{x_{t\rightarrow t+k-1}&\in\mathbb{X}^{k}: \exists x_{t+k\rightarrow\nt} \text{ s.t. } \\
& g_i\left(\phi,\xi,x\right) \leq 0,
\forall i=1,\ldots, m\Big\}.
\end{align*}
Further, the closed-form expression in Theorem~\ref{thm:con} motivates us to consider the following approximation of the optimal flexibility feedback for all $x\in\mathbb{X}$ and $x_{< t}\in\mathbb{X}^{t-1}$:
\begin{align}
\label{eq:approx}
\widehat{p}_{t,k}(x|x_{<t}) = \frac{\left|\mathcal{S}_k(\phi,\xi|(x_{< t}, x))\right|}{\left|\mathcal{S}_k(\phi,\xi|x_{<t})\right|}.
\end{align}
This, in turn, leads to the following recursive formula for all $2 \leq k\leq \nt-t+1$:
\begin{align}
\label{eq:recur}
\left|\mathcal{S}_k(\phi,\xi|x_{<t})\right|=
 \sum_{x\in \mathcal{S}_1(\phi,\xi|x_{<t})}&\left|\mathcal{S}_{k-1}(\phi,\xi|(x_{<t}, x))\right|.
\end{align}
Here, we use $k$ as the \textit{look-ahead depth} and note that, when $k=\nt$, the approximation in~(\ref{eq:approx}) becomes exact. 

Now, using this notation, in order to estimate the system capacity ${\digamma}(\phi,\xi)$, we need to estimate the size of $\mathcal{S}_k(\phi,\xi|x_{<t})$ and $\mathcal{S}_k(\phi,\xi|x_{\leq t})$ using the recursive formula in~\eqref{eq:recur}. Accomplishing this depends on calculating the feasible set for selecting $x_{t}$, given fixed $x_{<t}$, i.e., characterizing the set $\mathcal{S}_{k}(\phi,\xi|x_{<t})$ with look-ahead depth $k=1$.

To provide a characterization of the first-order approximation $\mathcal{S}_1(\phi,\xi|x_{<t})$, we make a monotonicity assumption on the disaggregation policy $\phi$, defined as follows.
\begin{definition}[Monotonicity]
A (causal) disaggregation policy $\phi$ is \textbf{monotone} if for any $t\in [\nt]$, $x_{t}\leq y_t$ implies that for all $x_{<t}\in\mathcal{S}_{t}(\phi,\xi)$,
\begin{align}
\label{eq:monotone}
\phi_{t}\left(\xi_{<t},\left(x_{<t}, x\right)\right) \preceq \phi_{t}\left(\xi_{<t},\left(x_{<t}, y\right)\right).
\end{align}
\end{definition}

Assuming that a scheduling algorithm $\phi$ is monotone, the feasible set of $x_{t}$ conditioning on $x_{<t}$ can be characterized by a closed interval, as stated in the following theorem.

\begin{theorem}
\label{thm:interval}
Consider a system of deferrable loads with constraints specified by Example~\ref{example:ev} and $\xi$ being the associated states. For $t\in[\nt]$, for any $\xi$ and monotone scheduling algorithm $\phi$, the set $\mathcal{S}_1(\phi,\xi|x_{<t})$ can be written as the intersection of the set of power signals $\mathbb{X}$ and a closed interval:
$
\mathcal{S}_1(\phi,\xi|x_{<t}) = \mathbb{X}\bigcup\mathcal{I}_t(\phi,\xi),
$
where $\mathcal{I}_t(\phi,\xi) := \left[\alpha_t,\beta_t\right]$ is a closed interval in $\mathbb{R}_{\geq 0}$.
\end{theorem}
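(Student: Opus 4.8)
The theorem states that for deferrable loads (EV charging as in Example 2.1) with monotone disaggregation policy $\phi$, the one-step feasible set $\mathcal{S}_1(\phi,\xi|x_{<t})$ is an interval intersected with $\mathbb{X}$. I need to show the set of signals $x_t$ that admit a feasible completion $x_{t+1\rightarrow T}$ is "convex" in the sense of being a single interval of reals restricted to the discrete power levels.

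The plan and key steps.

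\begin{proof}[Proof sketch]
The plan is to show that the set of feasible $x_t$ is both nonempty (under the standing assumption $\mathcal{S}(\phi,\xi)\neq\emptyset$) and closed under the ``betweenness'' property: if $x_t$ and $y_t$ are each feasible (each admits a feasible completion) and $x_t \leq z_t \leq y_t$, then $z_t$ is also feasible. Any subset of $\mathbb{R}_{\geq 0}$ that is bounded and closed under betweenness is an interval, which yields the stated form $\mathcal{S}_1(\phi,\xi|x_{<t})=\mathbb{X}\cap\mathcal{I}_t(\phi,\xi)$ with $\mathcal{I}_t=[\alpha_t,\beta_t]$.

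First I would fix $x_{<t}$ and unpack what ``$x_t$ is one-step feasible'' means via the constraints \eqref{eq:f123}: there must exist future decisions $\phi_{t+1},\ldots,\phi_T$ that, together with the state $\xi_t$ reached after applying $\phi_t(\xi_{<t},(x_{<t},x_t))$, deliver each EV's remaining energy before its deadline while respecting the per-EV rate bounds $0\leq\phi_s(j)\leq r(j)$ and the tracking constraints. The monotonicity assumption \eqref{eq:monotone} is the crucial lever: a larger $x_t$ produces a componentwise larger allocation $\phi_t$, hence componentwise smaller remaining energy demands $e_{t+1}(j)$ entering the next state. So the ``hard'' direction of feasibility is bounded below (too small an $x_t$ leaves too much energy undelivered) and bounded above (too large an $x_t$ may overshoot total energy $e(j)$ or violate rate limits). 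I would argue each of these thresholds is a single crossover point, giving $\alpha_t$ (smallest feasible signal, driven by deadline/energy-delivery feasibility) and $\beta_t$ (largest feasible signal, driven by not overdelivering or exceeding rates).

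The main obstacle I anticipate is the betweenness/convexity step for the \emph{lower} threshold $\alpha_t$: I must show that if a small-ish $x_t=a$ admits a feasible future and a larger $x_t=b$ does too, then any intermediate value also does. For the upper bound this is easier since feasibility of future energy delivery is monotone in remaining demand, and monotonicity of $\phi$ makes larger $x_t$ deliver more now, shrinking future requirements; but I must confirm a larger allocation now never destroys future feasibility by, e.g., overshooting some EV's energy cap $e(j)$. The clean way is to establish a monotone coupling: for $z_t$ between $x_t$ and $y_t$, construct a feasible completion for $z_t$ by interpolating (or by a greedy ``water-filling'' argument that uses the slack created by monotonicity) between the known completions for $x_t$ and $y_t$. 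I would lean on the linearity of constraints \eqref{eq:f123} (they are all linear equalities/inequalities in the $\phi_s(j)$) so that the feasible region in the decision variables is a polytope, and project: the projection of a polytope onto the single coordinate $x_t=\sum_j\phi_t(j)$ is an interval. This projection argument is in fact the cleanest route and sidesteps most of the combinatorics—once I observe that, for fixed $x_{<t}$, the joint feasible set of $(\phi_s(j))_{s\geq t}$ is a polytope and the map to $x_t$ is linear, the image is automatically a closed interval, and monotonicity of $\phi$ is what guarantees the induced allocation respects the policy rather than an arbitrary feasible point.
\end{proof}
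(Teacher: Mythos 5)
Your proposal hedges between two routes, and the one you declare ``the cleanest route and sidesteps most of the combinatorics'' --- the polytope-projection argument --- has a genuine gap. In the definition of $\mathcal{S}_1(\phi,\xi|x_{<t})$ the allocations $\phi_s(j)$ are \emph{not} free decision variables: they are outputs of the fixed disaggregation policy, $\phi_s(\xi_s,x_{\leq s})$, i.e., functions of the signal trajectory. Projecting the polytope $\bigl\{(\phi_s(j))_{s\geq t,\,j}: \text{constraints } \eqref{eq:f123}\bigr\}$ onto the coordinate $\sum_j\phi_t(j)$ characterizes the set of signals $x_t$ for which \emph{some} allocation exists, whereas the theorem concerns the set of signals that are feasible \emph{under the given policy} $\phi$, which is in general a strict subset (a policy can misallocate and destroy feasibility even when a feasible allocation exists). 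An interval can have non-interval subsets, so the projection being an interval proves nothing about $\mathcal{S}_1(\phi,\xi|x_{<t})$. Your closing clause --- that ``monotonicity of $\phi$ is what guarantees the induced allocation respects the policy'' --- is not an argument; it is precisely the missing step. A structural red flag: the projection argument never actually invokes monotonicity, yet the theorem is false for general non-monotone policies (a policy that deliberately strands energy at one particular signal level punches a hole in the feasible set), so no proof in which monotonicity plays no essential role can be correct.

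The fix is to carry out, with the policy in the loop, the betweenness step you only sketch --- and this is exactly the paper's proof, which is shorter than your coupling/water-filling machinery. Take $x<y$ both in $\mathcal{S}_1(\phi,\xi|x_{<t})$ and $z\in\mathbb{X}$ with $x\leq z\leq y$. Monotonicity \eqref{eq:monotone} gives the componentwise ordering
\begin{align*}
\phi_t\bigl(\xi_t,(x_{<t},x)\bigr) \ \preceq \ \phi_t\bigl(\xi_t,(x_{<t},z)\bigr) \ \preceq \ \phi_t\bigl(\xi_t,(x_{<t},y)\bigr).
\end{align*}
The right inequality shows the time-$t$ allocation under $z$ violates no constraint at time $t$, since the dominating allocation under $y$ is feasible; the left inequality shows every load's remaining demand after $z$ is at most its remaining demand after $x$, and since $x$ admits a feasible completion, these smaller residual demands can also be met before their deadlines. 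Hence $z\in\mathcal{S}_1(\phi,\xi|x_{<t})$, and betweenness plus boundedness yields $\mathcal{S}_1(\phi,\xi|x_{<t})=\mathbb{X}\cap[\alpha_t,\beta_t]$ with $\alpha_t=\min\mathcal{S}_1$ and $\beta_t=\max\mathcal{S}_1$. In short: your backup skeleton coincides with the paper's argument, but the route you actually commit to would prove a statement about the wrong set.
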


\begin{proof}[Proof of Theorem~\ref{thm:interval}]
To prove the theorem, it is equivalent to show that if $x\in\mathbb{X}$ and $y\in\mathbb{X}$ with $x<y$ are two feasible power levels in $\mathcal{S}_1(\phi,\xi|x_{<t})$, then any $z\in\mathbb{X}$ with $x\leq z\leq y$ is also in $\mathcal{S}_1(\phi,\xi|x_{<t})$. Since the disaggregation policy $\phi$ is monotonically causal, the inequality $z\leq y$ guarantees that the power scheduled with $x_{t}=z$ to each load is always larger or equal to the case when $x_{t}=x$. Therefore, considering that $x_{t}=y$ is a feasible choice, since $z$ does not violate any constraint for satisfying the demands of deferrable loads, it must also be a feasible power signal.
\end{proof}

It is typically straightforward to verify that the classical scheduling policies such as the least-laxity-first (LLF) scheduling and the earliest-deadline-first (EDF) scheduling are monotone, and thus the theory above applies. We demonstrate this for two classical policies, LLF and EDF, and one new policy termed feasibility interval maximization (FIM) in Appendix \ref{sec:monotonicity}.  FIM is a new policy motivated by the feasibility analysis in this paper. Provided with a power signal $x_{t}>\alpha_t$, FIM assigns power to the loads with negative laxity, proportionally to $-\rho_{\mathrm{Lax}}(j,t)$. Note that our purpose in discussing FIM is to demonstrate a contrast with LLF and EDF in our experimental results, not to present an ``optimal'' policy. 


\section{Feasibility via approximate flexibility feedback}
\label{app:feasibility}
Although the approximation of flexibility feedback in~\eqref{eq:approx} is not precise, in this section we show that it is accurate enough to ensure feasibility under certain conditions. Specifically, consider a system of deferrable loads with constraints specified by Example~\ref{example:ev}. The following lemma states if none of the loads demands ``excessive'' energy upon arrival and the system has enough capacity for charging every load at their peak rates, then the system is always feasible by choosing the power signal according to the approximate flexibility feedback. 

\begin{lemma}
Consider a system of deferrable loads with constraints specified by Example~\ref{example:ev}.
Suppose the following conditions hold:
\begin{enumerate}
\item There is no operational constraints and there exists $x\in\mathbb{X}$ such that $x\geq \sum_{j=1}^{\nn}r(j)$.
\item At each time  $t\in [\nt]$, the selected $x_{t}\in\mathbb{X}$ satisfies (with look-ahead depth $k\geq 1$) $$\widehat{p}_{t,k}(x|x_{<t})>0.$$
    \item The aggregator state $\xi$  satisfies that $$\rho_{\mathrm{Lax}}(j,t)\geq 0, \text{ for } t\in [a(j), a(j)+1)$$ and $j\in [\nn]$ where $\rho_{\mathrm{Lax}}(j,t)$ is the laxity of the load $j$ defined in~\eqref{eq:laxity}.
\end{enumerate}
It is guaranteed that $\mathcal{S}_1(\phi,\xi|x_{<t})\neq\emptyset$ for all $t\in [\nt]$ and any monotone disaggregation policy $\phi$.
\end{lemma}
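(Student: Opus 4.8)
The plan is to prove the claim by induction on $t$, carrying the stronger invariant that every load active at time $t$ has nonnegative laxity, $\rho_{\mathrm{Lax}}(j,t)\ge 0$, and extracting $\mathcal{S}_1(\phi,\xi|x_{<t})\neq\emptyset$ as a consequence. A useful preliminary observation is that, by the recursion~\eqref{eq:recur} together with the definition of $\mathcal{S}_k$, the numerator of $\widehat{p}_{t,k}(x|x_{<t})$ is positive exactly when the prefix $(x_{<t},x)$ admits a feasible completion over the \emph{whole} horizon. Hence, for every depth $k\ge 1$, condition~(2) is equivalent to selecting $x_t\in\mathcal{S}_1(\phi,\xi|x_{<t})$, so the only genuine content is that such a choice remains available at every step. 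The induction base is immediate: before the first arrival there are no active loads, and each load arriving at $t\in[a(j),a(j)+1)$ has $\rho_{\mathrm{Lax}}(j,t)\ge 0$ by condition~(3).

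The core is a local lemma: whenever all active loads at state $\xi_t$ have nonnegative laxity, a feasible signal exists and choosing one preserves the invariant. By Theorem~\ref{thm:interval}, monotonicity of $\phi$ gives $\mathcal{S}_1(\phi,\xi|x_{<t})=\mathbb{X}\cap[\alpha_t,\beta_t]$, so it suffices to place a power level of $\mathbb{X}$ inside $[\alpha_t,\beta_t]$. I would identify $\beta_t$ as the largest total energy absorbable at time $t$, bounded by $\sum_j \min\{r(j),e_t(j)\}\le\sum_j r(j)$, and $\alpha_t$ as the least total energy that must be delivered so no load misses its deadline, namely $\alpha_t=\sum_j[\,e_t(j)-(d(j)-t)\,r(j)\,]_+$. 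A termwise comparison then yields $\alpha_t\le\beta_t$: the invariant forces $e_t(j)\le (d(j)-t+1)\,r(j)$, so each summand of $\alpha_t$ is dominated by the matching summand $\min\{r(j),e_t(j)\}$ of $\beta_t$. Condition~(1) places the interval within the reachable signal range, since $\beta_t\le\sum_j r(j)\le\max\mathbb{X}$.

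To assemble these pieces, I would note that nonnegative laxity at $\xi_t$ lets me iterate the local lemma forward over $s=t,\dots,\nt$, constructing a feasible completion in $\mathbb{X}^{\nt-t+1}$ and thereby certifying $\mathcal{S}_1(\phi,\xi|x_{<t})\neq\emptyset$. The main induction then runs along the \emph{actual} path generated by condition~(2): given the invariant at $t$, the selected $x_t$ lies in $\mathcal{S}_1(\phi,\xi|x_{<t})$ and so drives $\phi$ to a state $\xi_{t+1}$ from which a feasible completion exists; since any feasible completion meets every remaining demand by its deadline, this is equivalent to $\rho_{\mathrm{Lax}}(j,t+1)\ge 0$ for all still-active loads, and its first coordinate exhibits a genuine grid point of $\mathbb{X}\cap\mathcal{S}_1(\phi,\xi|x_{\le t})$. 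Loads arriving at $t+1$ inherit nonnegative laxity from condition~(3), which reestablishes the invariant and closes the induction.

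The step I expect to be the main obstacle is producing an honest grid point of the discrete set $\mathbb{X}$ inside $[\alpha_t,\beta_t]$, rather than merely the real inequality $\alpha_t\le\beta_t$; this is exactly where the richness asserted by condition~(1) (and the presence of $0\in\mathbb{X}$ when no load is critical) must be used, and a sloppy argument would silently assume a continuous signal space. A secondary subtlety is that the conclusion must hold for a \emph{generic} monotone policy rather than a specific scheduler, so I cannot control the per-load allocation $\phi_t(j)$ directly; here I would lean entirely on Theorem~\ref{thm:interval}, which reduces the requirement to the \emph{existence} of a feasible disaggregation for each signal in $[\alpha_t,\beta_t]$ and thus lets the interval characterization absorb the dependence on $\phi$.
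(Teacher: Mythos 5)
Your plan has the same skeleton as the paper's own proof: induction over $t$, with conditions (1)--(2) propagating feasibility between arrivals and condition (3) plus monotonicity absorbing newly arrived loads. Your preliminary observation that, under the paper's literal definition of $\mathcal{S}_k$, positivity of $\widehat{p}_{t,k}(x|x_{<t})$ for any $k\ge 1$ is equivalent to $x\in\mathcal{S}_1(\phi,\xi|x_{<t})$ is correct and is used implicitly by the paper as well; the extra structure you add (the laxity invariant, the interval $[\alpha_t,\beta_t]$ from Theorem~\ref{thm:interval}, the termwise comparison $\alpha_t\le\beta_t$) is a refinement of, not a departure from, that argument.

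However, the refinement exposes a step that is genuinely missing, and your proposal leaves it open by its own admission. The pivot of your induction is the claimed equivalence ``a feasible completion exists $\iff$ $\rho_{\mathrm{Lax}}(j,t+1)\ge 0$ for all still-active loads.'' Only one direction holds (and only up to an off-by-one relative to \eqref{eq:laxity}): per-load nonnegative laxity is \emph{necessary}, but it is not sufficient, because feasibility additionally requires that at every remaining time there is a level of the discrete set $\mathbb{X}$ that $\phi$ can track exactly under \eqref{eq:f3}. That missing direction is precisely the ``grid point in $[\alpha_t,\beta_t]$'' obstacle you flag, so your local lemma is justified by an equivalence whose hard half \emph{is} the obstacle. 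Moreover, condition (1) cannot close it: it guarantees a single level $x\ge\sum_{j}r(j)$, which sits at or above the absorbable maximum $\beta_t\le\sum_{j}\min\{r(j),e_t(j)\}$ and is therefore infeasible for exact tracking whenever any residual demand falls below its peak rate. Concretely, with $\mathbb{X}=\{0,100\}$ and a single EV with $r(j)=3$, $e(j)=2$, and ample time to its deadline, the load has nonnegative laxity and condition (1) holds, yet no positive level of $\mathbb{X}$ is ever trackable, so $[\alpha_t,\beta_t]\subseteq[0,2]$ contains no grid point except possibly $0$ and the demand constraint \eqref{eq:f2} can never be met: your local lemma (``invariant plus condition (1) implies a feasible signal exists'') is false as stated. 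Closing the gap needs either reading \eqref{eq:f3} as an inequality (the signal as a cap on aggregate power) or an additional richness assumption on $\mathbb{X}$, neither of which is in the hypotheses. To be fair, the paper's own proof waves at exactly the same point (``there is always a feasible power level in $\mathbb{X}$'') without argument, so your proposal is no less rigorous than the original --- but as a proof, the decisive step remains open.
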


\begin{proof}
Assuming that at time $t-1\in [\nt]$, the chosen power signal $x_{t-1}$ satisfies $\widehat{p}_{t-1,k}(x_{t-1}|x_{<t-1})>0$, it remains to validate that there always exists some $x_{t}\in\mathbb{X}$ such that the approximated flexibility feedback $\widehat{p}_{t,k}(x|x_{<t})>0$, conditioning on the previously selected power levels $x_1,\ldots,x_{t-1}$. To see this, note that condition (1) and (2) ensures that if the system is feasible at the previous time step $t-1$ (\textit{i.e., } $\mathcal{S}_1(\phi,\xi|x_{<t-1})\neq\emptyset$), then there is always a feasible power level in $\mathbb{X}$ for $x_{t}$, assuming there is no new loads arrive at the current time $t$. Condition (3) further guarantees that the demands of the new loads can also be satisfied, as long as the disaggregation policy $\phi$ is monotone. Therefore $\widehat{p}_{t,k}(x|x_{<t})>0$. By induction over $t\in [\nt]$ the proof is completed.
\end{proof}

\section{Monotonicity of common policies}
\label{sec:monotonicity}

In this section we show that LLF, EDF, and FIM are monotone policies.  Throughout, we fix the environment parameter $\xi$ for deferrable loads and denote by $(a(j),d(j),e_{t}(j),r(j))$ the charging states of the $j$-th load at time $t\in[\nt]$ where $a(j),d(j)$ and $r(j)$ are defined in Example~\ref{example:ev} and $e_t(j)$ is the remaining energy to be delivered at time $t\in [\nt]$. Additionally, let $\delta_{t}(j):=[d(j)-t]^{+}$ be the remaining charging duration (excluding the current time slot). 

\paragraph{Least-laxity-first (LLF) scheduling}
The laxity of the load $j\in [\nn]$ at time $t\in[\nt]$ is defined as
\begin{align}
\label{eq:laxity}
\rho_{\mathrm{Lax}}(j,t):=\begin{cases}
\delta_{t}(j)-{e_{t}(j)}/r(j), & \  t\geq a(j)  \\
+\infty, & \  t< a(j)
\end{cases}.
\end{align}
If the laxity is negative, the car will never be fully charged and the the system becomes infeasible. Therefore, for any monotone disaggregation policy $\phi$, the corresponding bounds must satisfy
\begin{subequations}
\begin{align}
\label{eq:6.1}
    \alpha_t &\geq \sum_{j\in\underline{\mathcal{N}}(t)}\min\big\{r(j),-\rho_{\mathrm{Lax}}(j,t){r(j)}\big\},\\
\label{eq:6.2}
    \beta_t &\leq  \sum_{j\in\underline{\mathcal{N}}(t)} \min\big\{r(j),e_{t}(j)\big\},
\end{align}
\end{subequations}
where $\underline{\mathcal{N}}(t):=\{j\in [\nn]:\rho_{\mathrm{Lax}}(j,t)\leq 0\}$. It is immediate to see that the equalities~\eqref{eq:6.1} and~\eqref{eq:6.2} hold for LLF. 

\paragraph{Earliest-deadline-first (EDF) scheduling}
Under EDF the summation in (\ref{eq:6.1}) for $\alpha_t$ needs to be replaced by a summation over $\underline{\mathcal{N}}(t)\bigcup\mathcal{N}_{\mathrm{EDF}}(t)$ where a load $j$ is in $\mathcal{N}_{\mathrm{EDF}}(t)$ if there exists $i\in\underline{\mathcal{N}}(t)$ such that $d_{j}(t)\leq d_{i}(t)$:
\begin{align*}
\mathcal{N}_{\mathrm{EDF}}(t):=\left\{j\in [\nn]: \exists i\in\underline{\mathcal{N}}(t) \text{ s.t. } d_{j}(t)\leq d_{i}(t) \right\}.
\end{align*}

\paragraph{Feasibility-interval-maximization (FIM) scheduling}

Recall that, when provided with a power signal $x_{t}>\alpha_t(\phi,\xi)$, FIM assigns power to the loads with negative laxity, proportionally to $-\rho_{\mathrm{Lax}}(j,t)$. To understand the motivation behind FIM, observe that increasing the laxity of the loads decreases the lower bound $\alpha_t(\phi,\xi)$. Therefore, intuitively, it is desirable to ensure that as many loads as possible have non-negative laxity. Clearly, FIM is monotone, since the higher $x_t$ is, the larger amount of energy is assigned to the EVs. The upper and lower bounds of the interval can be computed the same as in~\eqref{eq:6.1} and~\eqref{eq:6.2}.

\section{Supplementary Simulation results}
\label{app:experiments}

In Table~\ref{tab:year}, we summarize the quality of the capacity estimation, undelivered energy percentage, and tracking error  (see \eqref{eq:mpe} and \eqref{eq:mse} for definitions) for three scheduling policies, EDF, LLF and FIM, on both the Caltech and JPL garages. We measure performance using the mean squared error (MSE) as the tracking error:
\begin{align}
\label{eq:mse}
\mathsf{MSE}(\phi,x):= \sum_{k=1}^{N}\sum_{t=1}^{\nt}\Big|\sum_{j=1}^{\nn}\phi_{t}^{(k)}(j)-x^{(k)}_t\Big|^2 / (N\times \nt),
\end{align}
where $x^{(k)}_t$ is the $t$-th power signal for the $k$-th test and $\phi_{t}^{(k)}(j)$ is the energy scheduled to the $j$-th load at time $t$ for the $k$-th test. Additionally, define the mean percentage error with respect to the undelivered energy as
\begin{align}
\label{eq:mpe}
\mathsf{MPE}(\phi,\xi):= \sum_{k=1}^{N}\sum_{t=1}^{\nt}\sum_{j=1}^{\nn}\phi_{t}^{(k)}(j)\big/{ \Big((N\times \nt)\cdot\sum_{j=1}^{\nn}e_j\Big)},
\end{align}
where $e_j$ is the energy request for each load $j$.

The results are averaged over the days from Sep. 1, 2018 to Aug. 31, 2019. The results show that FIM achieves the highest (estimated) system capacity, and lowest tracking error. However, as FIM always maximizes the feasible charging interval, as a trade-off, its average percentage of undelivered energy is always the largest.

\begin{table}[h]
\centering
\small
\begin{tabular}{c c c c} 
\toprule
    {$\phi$}  & {System Capacity} & {Undelivered (\%)} & {Tracking Error (kWh)} \\ \midrule
    {EDF}  & {$262.6726$} & {$\mathbf{5.6044}$} &  {$11.1570$}\\
    {LLF}  & {$271.7024$}  & {$10.1406$} & {$3.1356$}  \\
    {FIM}  & {$\mathbf{271.9571}$}  & {$10.3148$} & {$\mathbf{2.7661}$}  \\ \midrule
    {EDF}  &  {$221.5389$} & {$\mathbf{6.6837}$} &   {$15.5949$} \\
    {LLF}  &  {$242.4101$} & {$11.8726$} & {$7.0418$}  \\
    {FIM}  &  {$\mathbf{242.8318}$}  & {$12.1583$} & {$\mathbf{6.6034}$} \\ \bottomrule
\end{tabular}
\caption{Estimated system capacity ${\digamma}(\phi,\xi)$, undelivered energy percentage $\mathsf{MPE}(\phi,\xi)$, and tracking error $\mathsf{MSE}(\phi,x)$ for Caltech (top) and JPL (bottom) comparing EDF, LLF, and FIM.}
\label{tab:year}
\end{table}





\end{document}